\tikzset{
  level/.style   = { ultra thick, blue },
  connect/.style = { dashed, red },
  notice/.style  = { draw, rectangle callout, callout relative pointer={#1} },
  label/.style   = { text width=2cm }
}
\newtheorem{theorem}{Theorem}[section]
\newtheorem{lemma}[theorem]{Lemma}
\newtheorem{proposition}[theorem]{Proposition}
\newtheorem{corollary}[theorem]{Corollary}
\newtheorem{remark}[theorem]{Remark}
\newtheorem{assumption}[theorem]{Assumption}
\let\C\relax\newcommand{\C}{\mathbb{C}}\newcommand{\Z}{\mathbb{Z}}\newcommand{\R}{\mathbb{R}}\newcommand{\N}{\mathbb{N}}\newcommand{\D}{\mathbb{D}}
\newcommand{\bG}{\text{\boldmath$G$}}
\newcommand{\bx}{\text{\boldmath$x$}}
\newcommand{\bU}{\text{\boldmath$U$}}
\newcommand{\bbP}{\mathbb{P}}
\newcommand{\bbL}{\mathbb{L}_z}
\newcommand{\bbH}{\mathbb{H}_z}
\newcommand{\bbA}{\mathbb{A}_z}
\newcommand{\bbzL}{\mathbb{L}_{z,\bU}}
\newcommand{\bbzH}{\mathbb{H}_{z,\bU}}
\newcommand{\bbzA}{\mathbb{A}_{z,\bU}}
\newcommand{\bal}{\text{\boldmath$\alpha$}}
\newcommand\cB{\mathcal{B}}\newcommand\cC{\mathcal{C}}\newcommand\cD{\mathcal{D}}\newcommand\cG{\mathcal{G}}\newcommand\cH{\mathcal{H}}\newcommand\cI{\mathcal{I}}\newcommand\cL{\mathcal{L}}\newcommand\cM{\mathcal{M}}\newcommand\cN{\mathcal{N}}\newcommand\cT{\mathcal{T}}\newcommand\cU{\mathcal{U}}
\DeclareMathOperator{\tr}{Tr}\DeclareMathOperator{\Ker}{Ker}
\def\d{{\rm d}}\def\1{{\mathds{1}}}
\renewcommand{\ge}{\geqslant}\renewcommand{\le}{\leqslant}
\newcommand{\floor}[1]{\lfloor #1\rfloor}
\newcommand{\pa}[1]{\left( #1 \right)} % ()
\newcommand{\acs}[1]{\left\{ #1 \right\}} % {}
\newcommand{\seg}[1]{\left[ #1 \right]} % []
\newcommand{\ab}[1]{\left|#1\right|} % ||
\newcommand{\ps}[1]{\left< #1 \right>} % <>
\newcommand{\nor}[2]{ \left| \! \left| #1 \right| \! \right|_{#2} } % ||.||
\newcommand{\norm}[1]{ \left| \! \left| #1 \right| \! \right| } % ||.||
\newcommand\vp{\varphi} % φ
\newcommand{\ep}{\varepsilon} % ε
\let\p\relax\newcommand{\p}{\psi} % ψ
\newcommand{\f}[2]{\frac{#1}{#2}} % fraction
\newcommand{\mymax}[1]{\underset{\substack{#1}}{\text{\normalfont{max}}}\;} % max
\newcommand{\mymin}[1]{\underset{\substack{#1}}{\text{\normalfont{min}}}\quad} % min
\newcommand{\argmin}[1]{\underset{\substack{#1}}{\text{\normalfont{argmin}}}\quad} % min
\newcommand{\mysup}[1]{\underset{\substack{#1}}{\text{\normalfont{sup}}}\quad} % sup
\newcommand{\ind}[1]{_{\textup{#1}}} % indice
\newcommand{\hzmp}{\pa{H^0 + \mu}^{\f \kappa 2}}
\newcommand{\hzmm}{\pa{H^0 + \mu}^{-\f \kappa 2}}
\newcommand{\thh}{^{\text{th}}} 
\newcommand{\Om}{\Omega_{a}^{\lambda\ind{min} , \lambda\ind{max}}}
\DeclareMathOperator{\Gen}{Gen}
\DeclareMathOperator{\Aff}{Aff}
\newcommand{\bs}[1]{\textcolor{blue}{#1}}
\newcommand{\Del}{(1-s_\bal)}
\newcommand{\Delm}{(s_\bal -1)}
\newcommand{\co}{\mathfrak{C}}
\newcommand{\nv}[1]{#1}% for the first corrections
\newcommand{\sol}{P}
\newcommand{\pt}{\cT}
\newcommand{\gw}{W}
\author{Geneviève Dusson}
\address[Geneviève Dusson]{Laboratoire de Mathématiques de Besançon, UMR CNRS 6623,
Université de Franche-Comté, 16 route de Gray, 25030 Besançon, France} 
\email{genevieve.dusson@math.cnrs.fr}
\author{Louis Garrigue}
\address[Louis Garrigue]{Laboratoire ``analyse géométrie modélisation'', CY Cergy Paris Université, 95302 Cergy-Pontoise, France} 
\email{louis.garrigue@cyu.fr}
\author{Benjamin Stamm}
\address[Benjamin Stamm]{Institute of Applied Analysis and Numerical Simulation, University of Stuttgart, 70569 Stuttgart, Germany
}
\email{benjamin.stamm@mathematik.uni-stuttgart.de}
\title[Multipoint perturbation]{A multipoint perturbation formula\\for eigenvalue problems}
\date{\today}
\begin{document}

\maketitle

\begin{abstract}
Standard perturbation theory of eigenvalue problems consists of obtaining approximations of eigenmodes in the neighborhood of an operator where the corresponding eigenmode is known. Nevertheless, if the corresponding eigenmodes of several nearby operators are known, standard perturbation theory cannot simultaneously use all this knowledge to provide a better approximation. We derive a resolvent formula enabling such an approximation result, and provide numerical examples for which this method is more competitive than standard perturbation theory.
    % ABSTRACT MODIFIE !!
\end{abstract}

\section{Introduction}

Eigenvalue problems are ubiquitous in applied mathematics, for instance in quantum physics, structural mechanics, graph theory and optimization. 
Numerical methods to solve these eigenvalue problems are hence key to obtain accurate approximate solutions in an efficient and robust way. Let us take a Hilbert space, consider $H^0$ to be a self-adjoint operator on this Hilbert space, and define a family of self-adjoint operators
\begin{align*}
    H(G) := H^0 + G
\end{align*}
called Hamiltonians, where $G$ is a self-adjoint operator. The operators $H(G)$ should be understood in the quadratic form sense, and we assume them to be bounded from below. We consider the eigenvalue problem associated to $H(G)$ under variation of $G$. A widely used approach to treat eigenvalue problems is based on perturbation theory,
% Numerical methods to solve these eigenvalue problems efficiently are hence important. One of the most useful approaches to them is perturbation theory,
% \cgd{je ne dirais pas que la méthode de la perturbation est une méthode numerique pour résoudre des pb aux valeurs propres.} \louis{On ne dit pas que c'est seulement ça, donc personnellement ça me paraît ok}
where one knows or can easily compute the eigenstates of an operator $H(G_1)$ for some $G_1$ and deduces properties of approximate solutions for the close problem $H(G_1) + \gw$, where $\gw$ is small in some sense to be made precise later, see~\cite{Kato,Gaeta23,Murdock99}, \cite[Section XII]{ReeSim4} and~\cite[Chapters 3 and 5]{Lewin24} for a presentation of perturbation theory for eigenvalue problems.  
% This corresponds to considering a perturbed Hamiltonian
% \[
% 	H(G) := H^0 + G,
% \]
% \cgd{not sure we need this definition here}
% consisting in obtaining approximations of solutions of a problem which is close to another problem for which the solution is already known or easy to compute \cite{Kato,Gaeta23,Murdock99}, see also \cite[Section XII]{ReeSim4}.
In other words, from spectral information on $H(G_1)$, one can deduce information on the perturbed $H(G_1) + \gw$ for small perturbations $\gw$, thus locally around $H(G_1)$.

In contrast, when the solution of an eigenvalue problem with Hamiltonian $H(G_j)$ is known at $n$ points $\pa{G_j}_{j=1}^n$, standard perturbation theory does not exploit the full approximation power of using the information at all $n$ points to obtain approximations at a new point $G$ as it will only use local information of the closest point $G_j$. 
We illustrate the comparison between single point and multipoint perturbation theory on Figure~\ref{fig:illustr2_pt_and_multipoint}.

\begin{figure}[h]
\begin{tikzpicture}[scale=1]
% Points def
\tkzDefPoint(-3,2){G1}
\tkzDefPoint(-1.5,2.2){G2}
\tkzDefPoint(1,1){G3}
\tkzDefPoint(2.5,-1){G4}
\tkzDefPoint(-2,-1.5){G5}
\tkzDefPoint(0,0){G}
% Lignes droites
\draw [add= -0.0 and -0.1,red,-latex]  (G1) to (G);
\draw [add= -0.0 and -0.1,red,-latex]  (G2) to (G);
\draw [add= -0.0 and -0.1,red,-latex]  (G3) to (G);
\draw [add= -0.0 and -0.1,red,-latex]  (G4) to (G);
\draw [add= -0.0 and -0.1,red,-latex]  (G5) to (G);
% Ligne courbe
\draw [->,blue] (G3.south) to [out=-60,in=0] ($(G.south)+(0.2,-0.)$);
% Tracer les points
\foreach \n in {G1,G2,G3,G4,G5}
  \node at (\n)[circle,fill,inner sep=1.5pt]{};
  \foreach \n in {G}
  \node at (\n)[circle,fill,brown,inner sep=2pt]{};
% Points labels
\tkzLabelPoint[left](G1){$G_1$}
\tkzLabelPoint[right,left](G2){$G_2$}
\tkzLabelPoint[right](G3){$G_3$}
\tkzLabelPoint[right](G4){$G_4$}
\tkzLabelPoint[right,left](G5){$G_5$}
\tkzLabelPoint[left](G){$G$}
% Text
\node[color=blue] at (3.8,0.2) {Standard perturbation theory};
\node[color=red] at (-3,-0.2) {Multipoint perturbation};
\end{tikzpicture}
\caption{In standard perturbation theory, one gets approximations of the solution $\sol(G)$ at a new point $G$ by using only the closest $G_j$ on which we know the solution $\sol(G_j)$, while in multipoint perturbation we simultaneously use all the $G_j$'s on which we know the solutions.}\label{fig:illustr2_pt_and_multipoint}
\end{figure}

Other approaches for estimating spectral information of a given operator consist in using commutator relations with different operators such as in~\cite{LevPar02}, where upper bounds on the eigenvalue gap of abstract self-adjoint operators are obtained.

Let us now detail our framework. We consider some admissible set $\cG$ of self-adjoint operators $G$'s, which will be $H^0$-form bounded as will be precised later in~\eqref{eq:set_operators}. For $G \in \cG$ and $k \in \N\backslash\{0\}$, we denote by $\sol\pa{G}$ the spectral projection onto the eigenspace corresponding to the $k\thh$ lowest eigenvalue of the operator $H(G)$, which we assume to be non-degenerate. It is clear that the knowledge of $\sol(G)$ is equivalent to know an eigenvector corresponding to the $k\thh$ eigenvalue. 
Our framework can be generalized to projectors onto the space generated by several eigenvalues, and those eigenvalues could lie anywhere in the discrete spectrum since our methodology relies on spectral projectors, or, in other words, on density matrices for the selected states. 
However, this goes beyond the scope of this work and we restrict our analysis within this article to the non-degenerate case. 
% \cgd{Do we only look at the smallest eigenvector ? If this does not change anything, maybe we should do that ?} \louis{yes everything is the same if we take the lowest or not, but assuming it's the lowest does not simplify either so personally I would remain general}

The motivation for this work comes from~\cite{PolMikDus20,PolDusSta21}, where it is observed that if $G$ is close to a linear combination of $G_j$'s, that is
\[
	G \simeq \sum_{j=1}^n \alpha_j G_j,
\]
then the density matrix appears to be well approximated by the same linear combination of the corresponding density matrix, i.e.
\[
	\sol(G) \simeq \sum_{j=1}^n \alpha_j \sol(G_j).
\]
% at relatively high precision.
This amounts to saying that the map $G \mapsto \sol(G)$ is locally very close to linear, under some conditions.
% Standard perturbation theory enables to obtain approximations of $\sol(G)$ from a close $G_j$ to $G$, but using all the $\sol(G_j)$'s at the same time should \textit{a priori} be more efficient as it contains more information, at least in a regime where the ${G_j}$'s are close to each other. 
In this work, we aim at partly explaining this phenomenon developing a multipoint perturbation theory (a precise meaning will be given later). 
% \nv{Let us mention~\cite{LevPar02}, where the problem of interpolating the eigenmodes of some operators is tackled. Genevieve{}}
% The fact that are not aware of any framework explaining these observations motivates this work about multipoint perturbation theory (a precise meaning will be given later).

%Working with the spectral projectors, it is probably not surprising that our work is based on the resolvent formula~\eqref{eq:main_formula}\cbs{forward pointing: bad style}.
%
% containing three different smallness parameters. By integration, we can then obtain approximations of $\sol(G)$ in terms of $\sol(G_j)$'s and of some pseudo-inverses of $H(G_j)$'s, we give the first orders of the series expansion in Section~\ref{sub:exp_F}. We illustrate our work with applications to Schrödinger operators, and remark that the zeroth order term is as accurate as first order perturbation theory but almost free to compute. The produced approximations are particularly relevant when the $G_j$'s are close to each other. 

This article is organized as follows. In Section~\ref{sec:Definitions}, we provide the mathematical setting and present a review of standard perturbation theory. In Section~\ref{sec:main_sec}, we present our main result, the resolvent formula~\eqref{eq:main_formula}, and compute the first orders in the expansion of $\sol(G)$ in Section~\ref{sec:Firsts order}. We illustrate our work with numerical simulations on Schrödinger operators in Section~\ref{sec:simus}. We conclude that the produced approximations are particularly relevant when the $G_j$'s are close to each other, and when the $G$ on which one wants the solution is close to the affine space spanned by the $G_j$'s. All the proofs are provided in Section~\ref{sec:Proofs}.

\section{Mathematical setting}
\label{sec:Definitions}

In this section, we define the main mathematical objects that will be needed later on for the presentation of the multipoint perturbation method itself and the theoretical analysis that follows.

\subsection{Hamiltonians}%
\label{sub:Hamiltonian}

First, we consider a separable Hilbert space $\cH$ endowed with a scalar product $\ps{\cdot,\cdot}$, with corresponding norm $\|\cdot\|$. Inequalities on operators will be considered in the sense of forms, meaning that for two operators $A$ and $B$, we write $A \le B$ if there is a vector subspace $\cH_0 \subset \cH$, dense in $\cH$, such that $\ps{\psi, A \psi} \le \ps{\psi, B \psi}$ for all $\psi \in \cH_0$. We consider a self-adjoint operator $H^0$ on $\cH$, satisfying $H^0 \ge a$ for some $a \in \R_+$.
Throughout the paper, $H^0$ will be a fixed operator, for instance $-\Delta$ in the case of Schrödinger operators. Moreover, for a given operator $A$, we denote its adjoint operator by $A^*$.

We define the \emph{set of admissible \nv{operators{}}} by
\begin{align}\label{eq:set_operators}
\cG := \big\{G \text{ linear operator on } \cH, 
\; G^* = G, \; \forall \ep > 0 \; \exists c^G_\ep > 0, \; \ab{G} \le \ep H^0  + c^G_\ep \big\},
\end{align}
where $|A| := \sqrt{A^* A}$ is defined via the spectral theorem. \nv{The inequality in~\eqref{eq:set_operators} has to be understood in the sense of forms, meaning that for any $\p$ in the form domain of $H^0$, $\ps{\p, \ab{G} \p } \le \ep \ps{\p, H^0 \p} + c^G_\ep \nor{\p}{}^2 $. 
Note that $G \in \cG$ is not necessarily a ``potential'', i.e. a multiplication operator by a function, and can also be a differential operator.{}}
For any $G \in \cG$, by the Kato--Lions--Lax--Milgram--Nelson (KLMN) theorem~\cite[p. 323]{ReeSim2}, the Friedrichs extension of
\begin{align*}
H(G) = H^0 + G
\end{align*}
is a well-defined self-adjoint and bounded from below operator. Considering the set~\eqref{eq:set_operators} enables one to take into account Schrödinger operators perturbed by, e.g., singular potentials. We could also treat operators $H(G)$ which are not bounded from below, this extension is immediate from our presentation. 

The finite-dimensional case is covered by our analysis as we make no assumption whether the Hilbert space is finite or infinite-dimensional. In the finite-dimensional case, $\cG$ is the space of $m \times m$ matrices, where $m$ is the dimension of the considered Hilbert space.

\subsection{Non-degeneracy assumption}\label{sub:non_deg}
We use the notation $\N = \{0,1,\dots\}$, so $0 \in \N$, and we will use the notation $\N \backslash \{0\}$ for the set of strictly positive integers.  Since we only consider non-degenerate eigenmodes, we take
$
% \begin{align*}
k \in \N\backslash\{0\},
% \end{align*}
$
the targeted index of the eigenvalue, \textit{fixed} throughout all the document. Moreover, in all this article, for any $G \in \cG$ and any generic $p \in \N\backslash\{0\}$, we denote by $\lambda^p(G)$ the $p\thh$ eigenvalue of $H(G)$, counting multiplicities. \nv{To avoid unnecessarily overloading the notation, we do not explicitly write the $k$-dependence (labeling the different eigenmodes) for the density matrices, eigenvectors and eigenvalues, except when explicitely precised.{}}

For any $\pa{G_j}_{j=1}^n \in \cG^n$, we assume that the $k\thh$ eigenvalue is uniformly separated from the rest of the spectrum, and that it remains so as we navigate on the corresponding admissible space. To translate this into mathematical terms, we first define for any $\lambda\ind{min}, \lambda\ind{max} \in \R$, $\lambda\ind{min} < \lambda\ind{max}$, $a >0$, the set
	\nv{
\begin{multline}\label{hypo:non_deg_set}
\Om := \Big\{G \in \cG \; \Big| \; \sigma(H(G)) \cap  [\lambda\ind{min},\lambda\ind{max}] = \{\lambda^k(G)\}, \\
	\text{dist} \pa{\sigma(H(G)) \backslash \{\lambda^k(G)\} , [\lambda\ind{min},\lambda\ind{max}]} \ge a\Big\},
\end{multline}
{}}
where $\sigma(A)$ denotes the spectrum of $A$ for any operator $A$. We then formalize the following assumption.

\begin{assumption}
\label{as:convgi}
There holds
\begin{align}\label{hypo:non_deg}
	\nv{\exists \lambda\ind{min}, \lambda\ind{max} \in \R, \; \exists a > 0, \; \lambda\ind{min} < \lambda\ind{max}, \quad \mbox{such that} \quad \text{Conv } \pa{G_j}_{j=1}^n \subset \Om,{}}
\end{align}
where ``Conv'' denotes the convex hull.
\end{assumption}
Assumption~\ref{as:convgi} is illustrated in Figure~\ref{fig:assumption_nondeg_gap}. In the following, we will consider $G$'s such that
\begin{align*}%\label{hypo:non_deg_G}
G \in \Om.
\end{align*}
Moreover, in all this document, $\cC \subset \C$ will denote a contour, say a rectangle, such that $\cC \cap \R = \{\lambda\ind{min}, \lambda\ind{max}\}$. From Assumption~\ref{as:convgi}, this contour will enclose exactly one eigenvalue of $H(G)$, namely the $k\thh$ eigenvalue, for any $G \in \Om$.

\begin{figure}[h!]
\begin{center}
% \hspace{0.2cm}
\begin{tikzpicture}[scale=0.8]
  \draw[scale=1, domain=-3.5:3.5, smooth, variable=\x, blue] plot ({\x}, {0.1*\x - 0.07*\x*\x - 0.01*\x*\x*\x + 6.5});
  \draw[scale=1, domain=-3.5:3.5, smooth, variable=\x, blue] plot ({\x}, {-0.05*\x + 0.01*\x*\x - 0.001*\x*\x*\x + 6});
  % \draw[scale=1, domain=-3:3, smooth, variable=\x, blue] plot ({\x}, {0.2*\x + 0.07*\x*\x - 0.01*\x*\x*\x + 3.2});
  % \draw[scale=1, domain=-3:3, dashed, variable=\x, black] plot ({\x}, {0.2*\x*\x - 0.05*\x*\x*\x + 2.3});
  \draw[scale=1, domain=-3.5:3.5, smooth, variable=\x, blue] plot ({\x}, {-0.3*\x - 0.1*\x*\x + 0.05*\x*\x*\x + 4.5});
  % \draw[scale=1, domain=-3:3, smooth, variable=\x, red] plot ({\x}, {0.2*\x + 0.05*\x*\x - 0.00*\x*\x*\x + 2.5});
  % \draw[scale=1, domain=-3:3, dashed, variable=\x, black] plot ({\x}, {0.2*\x*\x - 0.05*\x*\x*\x + 1.7});
  \draw[scale=1, domain=-3.5:3.5, smooth, variable=\x, blue] plot ({\x}, {0.2*\x + 0.03*(\x-1)*(\x-1) + 0.01*\x*\x*\x -0.2});
  \draw[scale=1, domain=-3.5:3.5, smooth, variable=\x, blue] plot ({\x}, {0.1*(\x-1) - 0.01*(\x-1)*(\x-1) - 0.01*\x*(\x-1)*(\x-1) + 1});
  \draw[->] (-3.4, 0) -- (3.4, 0) node[right] {$G$};
  \draw[-] (3, 0) -- (3, -0.1); \node at (3,-0.4) {$G_2$};
  \draw[-] (-3, 0) -- (-3, -0.1); \node at (-3,-0.4) {$G_1$};
  \draw[->] (0, -0.5) -- (0, 6.5); 
  % Lambdas
  \draw[-] (-3.5, 5.5) -- (3.5, 5.5) node[right] {$\lambda_{\text{max}} + a$};
  \draw[-] (-3.5, 5) -- (3.5, 5) node[right] {$\lambda_{\text{max}}$};
  \draw[-] (-3.5, 2) -- (3.5, 2) node[right] {$\lambda_{\text{min}}$};
  \draw[-] (-3.5, 1.5) -- (3.5, 1.5) node[right] {$\lambda_{\text{min}} - a$};
   \node[blue] at (1.5,3.5) {$\lambda^k(G)$};
   \node[blue] at (0,7) {$\sigma(H(G))$};
\end{tikzpicture}
\end{center}\label{fig:assumption_nondeg_gap}
\caption{We illustrate Assumption~\ref{as:convgi}, where we consider that uniformly in $G$, the $k^{\text{th}}$ level is non-degenerate and that it is strictly separated from the rest of the spectrum by a distance $a > 0$.}
\end{figure}

\subsection{Operator norms}%
\label{sub:Norms}

Take $\mu \in \; ]-\inf \sigma (H^0),+\infty[$, and $\kappa \in [0,1]$. For any operator on $\cH$, we define the operator norm as
\begin{align*}
\norm{A} := \nor{A}{\cH \rightarrow \cH} = \mysup{\p \in \cH \\ \p \neq 0} \f{\norm{A \p}}{\norm{\p}},
\end{align*}
which can be equal to $+\infty$. The set $\{A : \cH \rightarrow \cH \;|\; \norm{A} < +\infty\}$ is usually denoted by $\cL(\cH)$. We define, for any operator $\Gamma$ of $\cH$,
\begin{align}\label{eq:norm_dmatrices}
\nor{\Gamma}{e} := \norm{ \pa{H^0 + \mu}^{\f \kappa 2} \; \Gamma \; \pa{H^0 + \mu}^{\f \kappa 2}}
\end{align}
which will be the norm on the space of density matrices. 
The norm on the space of Hamiltonians is the dual one, which is given by,
\begin{align}\label{eq:norm_params}
\nor{G}{a} := \norm{ \pa{H^0 + \mu}^{-\f \kappa 2} \; G \; \pa{H^0 + \mu}^{-\f \kappa 2}},
\end{align}
defined for any operator $G$ of $\cH$.
Taking $\kappa = 1$ is the most natural choice from a theoretical point of view since $\nor{\cdot}{e}$ becomes the natural energy norm in this case. 
However, the following proofs work for any $\kappa \in [0,1]$. 
In the numerical section, we will also take $\kappa = 1$, but one could consider $\kappa = 0$ to simplify the implementation.

\subsection{Resolvents and density matrices}%
Let $\cC$ be the contour introduced in Section~\ref{sub:non_deg} relying on Assumption~\ref{as:convgi} and take $G \in \Om$. 
Given the resolvent $\pa{z-H(G)}^{-1}$ of $H(G)$, we define
\begin{align}\label{eq:cauchy_formula}
\sol(G) := \f{1}{2\pi i} \oint_\cC  \pa{z-H(G)}^{-1} \d z.
\end{align}
By the spectral theorem, see~\cite[Section XII]{ReeSim4}, $\sol(G)$ is the spectral projection onto the one-dimensional space spanned by an eigenvector corresponding to the $k\thh$ eigenvalue of~$H(G)$.

\subsection{Pseudo-inverses}%
\label{sub:Pseudo-inverses}

We consider $G \in \Om$ and now introduce the pseudo-inverse operator
\begin{align}\label{def:pseudoinv}
K(G) := 
\left\{
\begin{array}{ll}
\pa{\pa{\lambda^k(G) - H(G)}_{\mkern 1mu \vrule height 2ex\mkern2mu \pa{\Ker \pa{\lambda^k(G) - H(G)}}^\perp}}^{-1} & \mbox{on }  \pa{\Ker \pa{\lambda^k(G) - H(G)}}^\perp, \\
0 & \mbox{on } \Ker \pa{\lambda^k(G) - H(G)},
\end{array}
\right.
\end{align}
extended on the whole of $\cH$ by linearity, that will be needed in the upcoming perturbation theory. We recall that $\Ker \pa{\lambda^k(G_j) - H(G_j)}$ is the one-dimensional vector space spanned by an eigenvector corresponding to the $k\thh$ eigenvalue of $H(G_j)$. \nv{Since $G \in \Om$, then $\nor{K(G)}{} \le a^{-1}$ ensuring that the operator $K(G)$ is not singular.{}}

Let us illustrate this definition for a finite-dimensional Hilbert space $\cH$. In this case, $H(G)$ is a hermitian
%self-adjoint 
matrix, and taking $\pa{u_r(G)}_{r=1}^{\dim \cH}$ a basis of $\cH$ formed by eigenvectors of $H(G)$ sorted such that the corresponding eigenvalues $\pa{\lambda^r(G)}_{r=1}^{\dim \cH}$ are increasing, provides the explicit representation for the pseudo-inverse
\begin{align*}
K(G) = \sum_{\substack{1 \le r \le \dim \cH \\ r \neq k}}  \pa{\lambda^k(G) - \lambda^r(G)}^{-1} Q_{u_r(G)},
\end{align*}
where $Q_u$ is the projection operator on the vector space spanned by $u \in \cH$.

\subsection{Standard linear perturbation theory}%
\label{sub:Density matrix perturbation theory}

We now present standard perturbation theory, with the aim of comparing it with the proposed multipoint perturbation method. Let us take $G_1 \in \cG$, consider the non-degeneracy assumption~\ref{as:convgi} and take $G \in \Om$. We consider that the unperturbed operator is $H(G_1)$ i.e. we assume knowledge of $P_1 := \sol(G_1)$ and $K_1 := K(G_1)$ using~\eqref{def:pseudoinv}. Standard perturbation theory consists in deducing approximations of $\sol(G)$ in series of $G - G_1$. The base resolvent equation is

\begin{align}\label{eq:pert_thy}
\pa{z-H(G)}^{-1} = \pa{z-H(G_1)}^{-1} \pa{1 - (G-G_1) \pa{z-H(G_1)}^{-1}}^{-1}.
\end{align}
We then define for $\ell\in\N$,
\begin{align*}
\pt_\ell :=  \f{1}{2\pi i} \oint_\cC  (z-H(G_1))^{-1} \pa{(G-G_1)  (z-H(G_1))^{-1}}^\ell \d z, \qquad \bbP_\ell := \sum_{p=0}^{\ell}\pt_p,
\end{align*}
and recall the following classical result~\cite{Kato,ReeSim4}.

\begin{proposition}[Standard linear perturbation theory bound]\label{prop:standard_pt}
Take $k \in \N\backslash\{0\}$, $G_1, G \in \cG$ satisfying the non-degeneracy assumption~\ref{as:convgi}, consider the contour $\cC$ as in Section~\ref{sub:non_deg}, define $\gw := G-G_1$ and $C := \max_{z \in \cC} \nor{\pa{z-H(G_1)}^{-1}}{e}$, and assume that $\nor{\gw}{a} < 1/(2C)$. Then there exists a constant $c > 0$ independent of $\gw$ and $\ell$ such that for any $\ell \in \N$,
\begin{align}\label{eq:bound_standard_pert}
\nor{\sol(G) - \bbP_\ell}{e} \le  c \pa{C \nor{\gw}{a}}^{\ell+1}.
\end{align}
\end{proposition}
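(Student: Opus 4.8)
The plan is to expand the resolvent identity~\eqref{eq:pert_thy} as a Neumann series and integrate term by term along $\cC$. First I would observe that, writing $R_1(z) := (z - H(G_1))^{-1}$ conjugated into the energy norm as $\widetilde R_1(z) := \hzmp R_1(z)\hzmp$ and $\widetilde g := \hzmm\, g\,\hzmm$, one has $\hzmp R_1(z)(g R_1(z))^\ell \hzmp = \widetilde R_1(z)\big(\widetilde g\,\widetilde R_1(z)\big)^\ell$, so that $\nor{R_1(z)(gR_1(z))^\ell}{e} \le \norm{\widetilde R_1(z)}^{\ell+1}\norm{\widetilde g}^\ell \le C^{\ell+1}\nor{g}{a}^\ell$ uniformly for $z \in \cC$. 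Under the assumption $\nor{g}{a} < 1/(2C)$ we have $C\nor{g}{a} < 1/2$, hence the factor $\big(1 - g R_1(z)\big)^{-1}$ in~\eqref{eq:pert_thy} is given by the norm-convergent Neumann series $\sum_{p\ge0}(gR_1(z))^p$, and therefore
\begin{align*}
(z-H(G))^{-1} - \sum_{p=0}^\ell R_1(z)(gR_1(z))^p = \sum_{p=\ell+1}^\infty R_1(z)(gR_1(z))^p =: S_\ell(z),
\end{align*}
with $\nor{S_\ell(z)}{e} \le \sum_{p=\ell+1}^\infty C^{p+1}\nor{g}{a}^p = \dfrac{C^{\ell+2}\nor{g}{a}^{\ell+1}}{1 - C\nor{g}{a}} \le 2C\,(C\nor{g}{a})^{\ell+1}$ for all $z\in\cC$.

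Next I would integrate this identity over the contour $\cC$. By definition~\eqref{eq:cauchy_formula} the left endpoint of the integral of $(z-H(G))^{-1}$ is $2\pi i\,\cF(G)$, and the integral of $\sum_{p=0}^\ell R_1(z)(gR_1(z))^p$ is $2\pi i\,\bbP_\ell$ by definition of $\cP_p$ and $\bbP_\ell$; the interchange of the finite sum with the integral is trivial, and the termwise integration of the infinite tail is justified by the uniform (geometric) bound just established, which also guarantees that $\cF(G) - \bbP_\ell = \frac{1}{2\pi i}\oint_\cC S_\ell(z)\,\d z$ makes sense. It remains to bound this in the $\nor{\cdot}{e}$-norm: using that $\nor{\cdot}{e}$ is, up to the fixed conjugation by $\hzmp$, an operator norm and hence the integral estimate $\nor{\frac{1}{2\pi i}\oint_\cC A(z)\,\d z}{e} \le \frac{|\cC|}{2\pi}\max_{z\in\cC}\nor{A(z)}{e}$ holds, we get
\begin{align*}
\nor{\cF(G) - \bbP_\ell}{e} \le \frac{|\cC|}{2\pi}\max_{z\in\cC}\nor{S_\ell(z)}{e} \le \frac{|\cC|}{2\pi}\cdot 2C\,(C\nor{g}{a})^{\ell+1},
\end{align*}
so that~\eqref{eq:bound_standard_pert} holds with $c := C\,|\cC|/\pi$, which depends only on $H(G_1)$ and the contour and not on $g$ or $\ell$.

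The main point requiring care — rather than a genuine obstacle — is the bookkeeping around the conjugated norm $\nor{\cdot}{e}$: one must check that conjugation by $\hzmp$ interacts multiplicatively with products $R_1(z)(gR_1(z))^\ell$ (inserting $\hzmm\hzmp = \mathrm{id}$ between factors, which is legitimate on the relevant dense domains since $H^0 + \mu$ is a positive self-adjoint operator and the $\kappa/2$ powers are defined by the spectral theorem), that $\widetilde R_1(z)$ and $\widetilde g$ are genuinely bounded operators in $\cL(\cH)$ so that the Neumann series converges in $\cL(\cH)$, and that the contour integral of the $\cL(\cH)$-valued tail converges in $\cL(\cH)$-norm; the last is standard once one has the uniform geometric bound on $\cC$ and the continuity of $z \mapsto S_\ell(z)$. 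Everything else is the classical Kato argument, and the constant $C = \max_{z\in\cC}\nor{(z-H(G_1))^{-1}}{e}$ is finite precisely because $\cC$ avoids $\sigma(H(G_1))$ by Assumption~\ref{as:convgi}.
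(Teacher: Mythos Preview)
Your proof is correct and follows essentially the same approach as the paper: expand the resolvent identity~\eqref{eq:pert_thy} as a Neumann series after conjugating by $\hzmp$ to work in the energy norm, bound the tail geometrically by $2C(C\nor{g}{a})^{\ell+1}$, and integrate over $\cC$. The only cosmetic difference is that the paper writes the tail in closed form as $R_1(z)(gR_1(z))^{\ell+1}\bigl(1 - gR_1(z)\bigr)^{-1}$ before estimating, whereas you sum the geometric series directly; the resulting bound and constant are the same.
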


For the sake of completeness, we provide a proof in Section~\ref{sec:Proofs}. 
% \cbs{is thi\gws proof really needed... it is standard, no?} \louis{yes but nowhere written in this form, at least I know no reference that we can cite, ppl usually rewrite stuff adapted to their situation} \cgd{maybe justify why we provide the proof if it is standard.} \louis{Done. It's usual to give proofs to standard stuff in case the standard stuff is not written in the same way somewhere else}
The terms $\bbP_\ell$ can be obtained explicitly using Cauchy's integration formula. For instance, the four first order terms read
\begin{align}\label{eq:dmpt}
\pt_0 &= P_1, \\
\pt_1 &= P_1 \gw K_1 + K_1 \gw P_1, \nonumber \\
\pt_2 &= \pa{P_1 \gw K_1 \gw K_1 + K_1 \gw P_1 \gw K_1 + K_1 \gw K_1 \gw P_1} \\
      &\qquad\qquad \qquad  - (P_1 \gw P_1 \gw K_1^2 + P_1 \gw K_1^2 \gw P_1 + K_1^2 \gw P_1 \gw P_1), \nonumber \\
\pt_3 &= (P_1 \gw P_1 \gw P_1 \gw K_1^3 + P_1 \gw P_1 \gw K_1^3 \gw P_1 + P_1 \gw K_1^3 \gw P_1 \gw P_1 + K_1^3 \gw P_1 \gw P_1 \gw P_1) \nonumber \\
&  + (P_1 \gw K_1 \gw K_1 \gw K_1 + K_1 \gw P_1 \gw K_1 \gw K_1 + K_1 \gw K_1 \gw P_1 \gw K_1 + K_1 \gw K_1 \gw K_1 \gw P_1) \nonumber \\
&  - \pa{P_1 \gw P_1 \gw K_1^2 \gw K_1 + P_1 \gw P_1 \gw K_1 \gw K_1^2 + K_1 \gw K_1^2 \gw P_1 \gw P_1 + K_1^2 \gw K_1 \gw P_1 \gw P_1 } \nonumber \\
&  - \pa{P_1 \gw K_1 \gw K_1^2 \gw P_1 + P_1 \gw K_1^2 \gw K_1 \gw P_1 + P_1 \gw K_1 \gw P_1 \gw K_1^2 + P_1 \gw K_1^2 \gw P_1 \gw K_1} \nonumber \\
&  -\pa{ K_1 \gw P_1 \gw K_1^2 \gw P_1 + K_1^2 \gw P_1 \gw K_1 \gw P_1 + K_1 \gw P_1 \gw P_1 \gw K_1^2 + K_1^2 \gw P_1 \gw P_1 \gw K_1}. \nonumber
\end{align}
% Computing the order of precision $\delta^{2n}$ (size of the error) of the Multipoint expansion requires to compute terms with $2n-1$ resolvents, while with perturbation theory, it requires to compute terms with $2n$ resolvents, so we have less computations than in perturbation theory.
% We are now ready to introduce our multipoint perturbation theory in the next section.

\section{Main result}
\label{sec:main_sec}

In this section, we finally introduce the multipoint perturbation theory, which, compared to standard perturbation theory, allows one to retain more information of a series of density matrices related to $G_1,\ldots,G_n$.

\subsection{Main result}%

The following formula is the main result of this document. Its purpose is to express the resolvent $\pa{z - H(G)}^{-1}$ in terms of the known resolvents $\pa{z - H(G_j)}^{-1}$, so that the orthogonal projection $\sol(G)$ can also be expanded in terms of known terms
as will be shown in Section~\ref{sub:exp_F}.

\begin{theorem}[Multipoint resolvent formula]\label{thm:res_form}
Take $\bm{\alpha} = \pa{\alpha_j}_{j=1}^n \in \R^n$, $\bG = \pa{G_j}_{j=1}^n\in \cG^n$, $G \in \cG$, $z \in \C \backslash \pa{ \sigma(H(G)) \cup_{j=1}^n \sigma(H(G_j))}$ and $s_\bal :=\sum_{j=1}^{n} \alpha_j$ such that  $s_\bal \neq 0$. %and such that $s_\bal \neq 0$, where $s_\bal :=\sum_{j=1}^{n} \alpha_j$. 
We define the operators
\begin{align*}
\bbH &:= s_\bal^{-1} \sum_{\substack{1 \le i < j \le n}} \alpha_i \alpha_j \pa{G_i - G_j} \pa{z - H(G_j)}^{-1} \pa{G_i - G_j} \pa{z - H(G_i)}^{-1}, \\
\bbA &:=s_\bal^{-1} \sum_{j=1}^{n} \alpha_j G_{j} \pa{z - H(G_j)}^{-1}, \qquad\qquad  \bbL := s_\bal^{-1}\sum_{j=1}^{n}  \alpha_j \pa{z - H(G_j)}^{-1}.
\end{align*}
We also define $\gw := G - \sum_{j=1}^{n} \alpha_j G_j$. If ${1 + \bbH - \Delm \bbA - \gw \bbL}$ is invertible, there holds
\begin{align}\label{eq:main_formula}
\boxed{\pa{z - H\pa{G}}^{-1} = \bbL \pa{1 + \bbH - \Delm \bbA - \gw \bbL}^{-1}.}
\end{align}

\end{theorem}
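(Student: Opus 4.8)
The plan is to prove the identity by a direct algebraic manipulation of resolvents, treating $s_\bal^{-1}\sum_j \alpha_j (z-H(G_j))^{-1}$ as the natural "averaged" candidate resolvent and showing that the operator $1 + \bbH - \Delm\bbA - g\bbL$ is precisely the correction factor that maps it onto $(z-H(G))^{-1}$. Concretely, it suffices to show
\[
\pa{z - H(G)}\, \bbL \, = \, 1 + \bbH - \Delm \bbA - g \bbL ,
\]
since then right-multiplying by $\pa{1 + \bbH - \Delm\bbA - g\bbL}^{-1}$ (which exists by hypothesis) and left-multiplying by $\pa{z-H(G)}^{-1}$ yields~\eqref{eq:main_formula}. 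So the whole theorem reduces to verifying this single operator identity on $\cH$, with no contour integral or spectral theory needed beyond the fact that each $(z-H(G_j))^{-1}$ and $(z-H(G))^{-1}$ is a genuine bounded inverse for $z$ outside the relevant spectra.

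\emph{Key steps.} First I would write $z - H(G) = \bpa{z - H(G_j)} + \bpa{H(G_j) - H(G)} = \bpa{z-H(G_j)} - \bpa{G - G_j}$, so that
\[
\pa{z-H(G)}\pa{z-H(G_j)}^{-1} \, = \, 1 - \pa{G-G_j}\pa{z-H(G_j)}^{-1}.
\]
Multiplying by $s_\bal^{-1}\alpha_j$ and summing over $j$ gives
\[
\pa{z-H(G)}\bbL \, = \, 1 - s_\bal^{-1}\sum_{j=1}^n \alpha_j \pa{G - G_j}\pa{z-H(G_j)}^{-1}.
\]
The second step is to split $G - G_j = \bpa{G - \sum_i\alpha_i G_i} + \bpa{\sum_i \alpha_i G_i - G_j} = g + \sum_i \alpha_i (G_i - G_j)$, using $s_\bal = \sum_i \alpha_i$, and insert this into the sum. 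The $g$-part immediately produces $-g\,\bbL$. The remaining double sum is
\[
- s_\bal^{-1}\sum_{j=1}^n \sum_{i=1}^n \alpha_i\alpha_j \pa{G_i - G_j}\pa{z-H(G_j)}^{-1}.
\]
Third, I would handle this double sum by isolating the diagonal $i=j$ (which vanishes since $G_i - G_j = 0$) and, on the off-diagonal, grouping the terms $(i,j)$ and $(j,i)$. For a fixed unordered pair $i<j$ one gets $\alpha_i\alpha_j\bpa{(G_i-G_j)(z-H(G_j))^{-1} + (G_j-G_i)(z-H(G_i))^{-1}}$; I would rewrite $(G_i - G_j)(z-H(G_j))^{-1} = (G_i-G_j)(z-H(G_i))^{-1} + (G_i-G_j)\bpa{(z-H(G_j))^{-1} - (z-H(G_i))^{-1}}$ and use the second resolvent identity $(z-H(G_j))^{-1} - (z-H(G_i))^{-1} = (z-H(G_j))^{-1}\pa{G_i - G_j}\pa{z-H(G_i))^{-1}$ to extract exactly the term appearing in $\bbH$. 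After this substitution the "leading" parts combine into $-s_\bal^{-1}\sum_{i\neq j}\alpha_i\alpha_j (G_i - G_j)(z-H(G_i))^{-1}$, which I would recognize as related to $\bbA$ and $\bbL$: since $\sum_{i\neq j}\alpha_i\alpha_j(G_i - G_j)(z-H(G_i))^{-1} = s_\bal\sum_i \alpha_i G_i (z-H(G_i))^{-1} - s_\bal\sum_i \alpha_i G_i(z-H(G_i))^{-1} \cdots$ — more carefully, $\sum_{j: j\neq i}\alpha_j = s_\bal - \alpha_i$, so the sum collapses to $(s_\bal - \alpha_i)$-weighted and $\alpha_i$-weighted pieces that reassemble into a combination of $\bbA$ and $\bbL$, yielding the $-\Delm\bbA$ term together with the remaining $+1$ bookkeeping. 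Carefully tracking the scalars $(1-s_\bal)$ versus $(s_\bal - 1)$ and the $s_\bal^{-1}$ normalizations is where the sign of $\Delm = (s_\bal - 1)$ enters.

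\emph{Main obstacle.} I expect the bookkeeping in the third step to be the delicate part: reorganizing the unrestricted double sum into diagonal plus symmetrized off-diagonal, then peeling off $\bbH$ via the resolvent identity and correctly collapsing the residual single sum $\sum_{i}(s_\bal - \alpha_i)(G_i)(z-H(G_i))^{-1}$-type expression into the prescribed $\Delm\bbA$ and the leftover scalar $1$. A secondary point to be careful about is the non-commutativity of the $G_i$'s and the resolvents — the second resolvent identity must be applied with the factors in the exact order dictated by the definition of $\bbH$ (note $\bbH$ has $(G_i-G_j)$ on both the outside left and in the middle, with $(z-H(G_j))^{-1}$ before $(z-H(G_i))^{-1}$), so I must match the ordering $i<j$ in $\bbH$ against whichever index I chose to expand around. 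Once the operator identity $(z-H(G))\bbL = 1 + \bbH - \Delm\bbA - g\bbL$ is established, the conclusion is immediate and requires only the stated invertibility hypothesis.
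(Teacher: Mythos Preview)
Your overall strategy is correct and is exactly the paper's: reduce~\eqref{eq:main_formula} to the operator identity $(z-H(G))\,\bbL = 1 + \bbH - \Delm\bbA - g\bbL$ and verify the latter starting from $(z-H(G))(z-H(G_j))^{-1} = 1 - (G-G_j)(z-H(G_j))^{-1}$. The trouble is a slip in your second step. The equality $\sum_i\alpha_iG_i - G_j = \sum_i\alpha_i(G_i-G_j)$ that you invoke is false unless $s_\bal=1$, since the right-hand side equals $\sum_i\alpha_iG_i - s_\bal G_j$. The correct decomposition is
\[
G - G_j \;=\; g \;+\; \sum_{i=1}^n\alpha_i(G_i-G_j) \;+\; (s_\bal-1)\,G_j,
\]
and the extra piece $(s_\bal-1)G_j$, after multiplying by $s_\bal^{-1}\alpha_j (z-H(G_j))^{-1}$ and summing over $j$, yields $(s_\bal-1)\bbA = \Delm\bbA$ directly. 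That is where the $\bbA$-contribution actually lives.

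With this correction your third step is clean rather than delicate. Writing $R_j := (z-H(G_j))^{-1}$, the remaining double sum $\sum_{i,j}\alpha_i\alpha_j(G_i-G_j)R_j$ contributes \emph{only} the $\bbH$-term: symmetrizing in $(i,j)$ gives $\sum_{i<j}\alpha_i\alpha_j(G_i-G_j)(R_j-R_i)$, and the second resolvent identity $R_j-R_i = -R_j(G_i-G_j)R_i$ (note the sign; the version you wrote is off by a sign) turns this into $-\sum_{i<j}\alpha_i\alpha_j(G_i-G_j)R_j(G_i-G_j)R_i = -s_\bal\bbH$. There is no residual single sum to massage into $\bbA$ and no further ``$+1$ bookkeeping''; those pieces were already accounted for in steps~1 and~2 once the missing $(s_\bal-1)G_j$ term is restored. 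The obstacle you anticipated in step~3 is thus an artifact of the earlier slip, and the ordering worry about matching $i<j$ in $\bbH$ disappears because the symmetrization does it automatically.
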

The proof is given in Section~\ref{sec:Proofs}. 

\begin{remark}[Linearity]
Formula~\eqref{eq:main_formula} provides an explicit formula for the resolvent of the Hamiltonian $H(G)$ that is expressed as a linear combination of the resolvents of $H(G_1),\ldots,H(G_n)$, namely $\bbL$, times a perturbation term when $\bbH - \Delm \bbA - \gw \bbL$ is small.
\end{remark}

\begin{remark}[$\bal$ is our degree of freedom]
Let us note that the degrees of freedom to use the formula are the coefficients $\alpha_j$ for $j$ from 1 to $n$, which can be freely chosen in $\R$ as long as $\sum_{j=1}^{n} \alpha_j \neq 0$.
\end{remark}

\begin{remark}[Term involving $\Del$]
We cannot ``absorb'' $\Delm \bbA$ in $-\gw \bbL$ in the sense that in general, $\Delm \bbA + \gw\bbL$ cannot be written as $\widetilde{\gw} \, \bbL$ for some operator $\widetilde{\gw}$, i.e., the two operators can be linearly independent. Hence we a priori need to keep both terms in the expansion. Note however that if we impose $s_\bal = 1$, then the term in $\Delm \bbA$ simply disappears.
\end{remark}

\begin{remark}[Reduction to standard linear perturbation theory]
	If we take $\alpha_1 = 1$ and $\alpha_j = 0$ for all $j \in \{2,\dots,n\}$, or $G_j = G_1$ for any $j \in \{2,\dots, n\}$, formula~\eqref{eq:main_formula} boils down to~\eqref{eq:pert_thy}. Indeed, in the latter case, $\bbH = 0$, $\bbA = G_1 \pa{z-H(G_1)}^{-1}$ and $\bbL = \pa{z-H(G_1)}^{-1}$ from which we easily obtain \eqref{eq:pert_thy}. Hence we recover standard perturbation theory.
% 	\cbs{problem with numbering} \louis{why ?} \cgd{in the latter case, we have $s_\alpha = N$. Should we renormalize?} \louis{no I compute $\bbH = 0$, $\bbA = G_1 \pa{z-H(G_1)}^{-1}$ and $\bbL = \pa{z-H(G_1)}^{-1}$ and finally \eqref{eq:pert_thy}} 
	% and we recover standard perturbation theory.
\end{remark}

% Using Cauchy's integral formula, the integration of the perturbative series will yield precise approximations of the unknown density matrix $\sol(G)$, but this last step will be the subject of Section \ref{sec:Firsts order}.

\subsection{Three perturbation parameters}

The next step of this multipoint linear perturbation theory is to expand the expression
\begin{align}\label{eq:neumann_three}
\pa{1 + \bbH - \Delm \bbA - \gw \bbL}^{-1}
\end{align}
appearing in~\eqref{eq:main_formula} in terms of a Neumann series, so the perturbative terms can be integrated over the complex contour $\cC$ and give approximations for the density matrix $\sol(G)$. For this, we need $\bbH - \Delm \bbA - \gw \bbL$ to be small enough. Let us therefore define the three smallness parameters
\begin{align*}
	\delta_{\bal,\bG} := \mymax{1 \le i,j \le n} \sqrt{|\alpha_i\alpha_j|} \nor{G_i - G_j}{a},\qquad \delta_\bal := \ab{1 - \sum_{j=1}^{n} \alpha_j}, \qquad \delta_\gw := \nor{\gw}{a},
\end{align*}
recalling $\gw = G - \sum_{j=1}^{n} \alpha_j G_j$, and where
\begin{itemize}
\item $\delta_{\bal,\bG}$ enters the estimation of $\bbH$, it is small when all the $G_j$'s are close to each others or when $\alpha_j$ is small,
\item $\delta_\bal$ enters into $\Delm \bbA$, it is small when $\sum_{j=1}^{n} \alpha_j$ is close to 1,
\item $\delta_\gw$ allows to estimate $\gw \bbL$, it is small when $G$ is close to $\sum_{j=1}^{n} \alpha_j G_j$.
\end{itemize}
All three parameters need to be small for the expansion of \eqref{eq:neumann_three} to be possible. In fact, a sufficient condition is that the $G_j$'s are close to each other and $G$ is close to the affine space
\begin{align*}
\Aff \pa{G_j}_{j=1}^{n}:= \acs{\sum_{j=1}^{n} \alpha_j G_j \;\middle|\; \alpha_j \in \R, \sum_{j=1}^{n} \alpha_j = 1},
\end{align*}
which is the smallest affine subspace of $\cG$ containing all the $G_j$'s.

\subsection{Perturbation bound}%
\label{sub:Perturbation bound}
The final step consists in integrating the multipoint perturbative resolvent formulation along the complex contour $\cC$ to obtain corresponding approximations for the density matrix $\sol(G)$ via Cauchy's formula~\eqref{eq:cauchy_formula}. We first define the perturbation terms
\begin{align}
\label{eq:Dabc}
\cD_{(2a,b,c)} &:= 
% \sum_{B_1, \dots, B_m \in \mathcal S_{(a,b,c)}}
\f{(-1)^{a}}{2\pi i}(s_\bal -1)^b \oint_\cC \bbL \sum_{\substack{B_1, \dots, B_{a+b+c} \in \{\bbH, \bbA, \gw \bbL\} \\ \ab{\{i \;|\; B_i = \bbH\}} = a \\ \ab{\{i \;|\; B_i =  \bbA\}} = b\\ \ab{\{i \;|\; B_i = \gw\bbL\}} = c}} 
B_1 B_2 \cdots B_{a+b+c} \d z, \nonumber\\
\cD_{(2a+1,b,c)} &:= 0,\qquad  \qquad \cD_{\ell} :=  \sum_{\substack{a,b,c \in \N \\ a+b+c = \ell}} \cD_{(a,b,c)}, 
\qquad \qquad
\D_{\ell} := \sum_{p=0}^\ell \cD_{p},
\end{align}
% \[
%     \mathcal S_{(a,b,c)} = 
%     \Big\{
%     (B_1, \dots, B_m) \in \{\bbH, \Del \bbA, -g \bbL\}^{a+b+c} \;\Big|\; \ab{\{i \;|\; B_i = \bbH\}} = a \wedge \ab{\{i \;|\; B_i = \Del \bbA\}} = b \wedge \ab{\{i \;|\; B_i = -g\bbL\}} = c
%     \Big\},
% \]
% with $m=a+b+c$.
% \cgd{above, convention when the sum is empty ? (Should be identity if I understand correctly)} \louis{when is it empty ?} \cgd{For $D_{000}$ - proposition below}
In the summation formula for $\cD_{(2a,b,c)}$, we consider all the combinations of products $B_p$'s in $\{\bbH, \bbA, \gw \bbL\}$, each once, with the convention that the empty sum on the first line is taken as the identity ; to clarify how to compute them, particular examples will be presented in the upcoming Section~\ref{sub:exp_F}. Then we have the following result.
\begin{corollary}[Multipoint perturbation bound]\label{cor:multipoint_bound}
 Let us take $k \in \N\backslash\{0\}$, $(G_j)_{j=1}^n \in \cG^n$ and consider that Assumption~\ref{as:convgi} is satisfied. Let us define 
\begin{align*}
        c_\bal := s_\bal^{-1} \pa{1 + \max_{1 \le j \le n} \ab{\alpha_j}}
\end{align*}
and take $\ell \in \N$. 
 There exist $C_\ell \in \R_+$ and $m_\ell \in \N$ such that for any $G \in \Om$ and any $\bal \in \R^n$ with $s_\bal \neq 0$ and $\delta_{\bal,\bG} + \delta_\bal + \delta_\gw \le 1/2$,
\begin{align}\label{eq:main_bound}
	\nor{\sol(G) - \D_\ell}{e} \le C_\ell c_\bal^{m_\ell} \pa{\delta_{\bal,\bG}^{\ell +1 + \xi_\ell} + \delta_\bal^{\ell +1} + \delta_\gw^{\ell+1}},
\end{align}
where $\xi_\ell$ is $1$ if $\ell$ is even and $0$ otherwise, and $C_\ell $ and $m_\ell$ are independent of $G$ and $\bal$ but depend on $\ell$.
\end{corollary}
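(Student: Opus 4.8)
The plan is to combine the exact resolvent identity~\eqref{eq:main_formula} of Theorem~\ref{thm:res_form} with a Neumann-series expansion of the correction factor, then integrate term by term along $\cC$ and estimate the tail uniformly in $G$ and $\bal$. Concretely, first I would fix $G\in\Om$ and $\bal\in\R^n$ with $s_\bal\neq 0$ and check that, on the contour $\cC$, the operator $\bbH-\Delm\bbA-g\bbL$ is small enough in the $\nor{\cdot}{e}$-operator norm to make $\pa{1+\bbH-\Delm\bbA-g\bbL}^{-1}$ expand as a convergent Neumann series. This requires three auxiliary norm bounds, valid for $z\in\cC$: $\nor{\bbH}{\cdot}\lesssim c_\bal^{m}\,\delta_{\bal,\bG}^2$, $\nor{\Delm\bbA}{\cdot}\lesssim c_\bal^{m}\,\delta_\bal$, and $\nor{g\bbL}{\cdot}\lesssim c_\bal^{m}\,\delta_g$, where the implicit constants come from $\max_{z\in\cC}\nor{\pa{z-H(G_j)}^{-1}}{e}$ (finite and uniform over $\mathrm{Conv}(G_j)_{j=1}^n$ by Assumption~\ref{as:convgi}, by the same compactness/continuity argument underlying Proposition~\ref{prop:standard_pt}) together with the $\kappa$-interpolation identity $\nor{AB}{e}\le\nor{A}{e}\nor{B}{a}$-type inequalities relating~\eqref{eq:norm_dmatrices} and~\eqref{eq:norm_params}. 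The key structural point for $\bbH$ is that each summand carries two factors $G_i-G_j$, hence the quadratic power $\delta_{\bal,\bG}^2$; this is exactly what produces the shift $\xi_\ell$ in~\eqref{eq:main_bound}.

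Next I would write $\pa{1+\bbH-\Delm\bbA-g\bbL}^{-1}=\sum_{N\ge 0}(-1)^N(\bbH-\Delm\bbA-g\bbL)^N$ and expand each $N$-th power multinomially into words $B_1\cdots B_N$ with letters in $\{\bbH,-\Delm\bbA,-g\bbL\}$; grouping by the triple $(a,b,c)$ counting occurrences of $\bbH$, $\bbA$, $g\bbL$ and inserting the prefactor $\bbL$ and $\f{1}{2\pi i}\oint_\cC(\cdot)\,\d z$ reproduces exactly the definition~\eqref{eq:Dabc} of $\cD_{(a,b,c)}$, with $\cD_{(2a+1,b,c)}=0$ because an odd total number of resolvent-difference factors integrates to zero against the contour (a parity/residue argument, as in standard perturbation theory where only $\cP_\ell$ with the stated structure survive). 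Summing the words with $a+b+c\le\ell$ gives $\D_\ell$, and the remainder $\cF(G)-\D_\ell$ is $\f{1}{2\pi i}\oint_\cC\bbL\sum_{a+b+c\ge\ell+1}(\text{words})\,\d z$.

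To bound this tail I would feed the three norm estimates into the multinomial sum: a word with $(a,b,c)$ contributes at most (constant)$^{a+b+c}c_\bal^{m(a+b+c)}\delta_{\bal,\bG}^{2a}\delta_\bal^b\delta_g^c$, and the number of words with given $(a,b,c)$ is $\binom{a+b+c}{a,b,c}\le 3^{a+b+c}$. Under the hypothesis $\delta_{\bal,\bG}+\delta_\bal+\delta_g\le 1/2$ the geometric series in $a+b+c$ converges, and the lowest-order surviving term with $a+b+c=\ell+1$ (after killing the odd-$a$-parity contributions, which forces one to also treat $a+b+c=\ell$ with the extra $\bbH$ factor, explaining the $\ell+1+\xi_\ell$ rather than $\ell+2$) dominates, yielding $\nor{\cF(G)-\D_\ell}{e}\le C_\ell c_\bal^{m_\ell}\pa{\delta_{\bal,\bG}^{\ell+1+\xi_\ell}+\delta_\bal^{\ell+1}+\delta_g^{\ell+1}}$ after bounding each mixed monomial $\delta_{\bal,\bG}^{2a}\delta_\bal^b\delta_g^c$ with $2a+b+c\ge\ell+1$ by the sum of the three pure powers (using $xy\le x^2+y^2$ repeatedly, or simply $\delta_{\bal,\bG}^{2a}\delta_\bal^b\delta_g^c\le\delta_{\bal,\bG}^{\ell+1+\xi_\ell}+\delta_\bal^{\ell+1}+\delta_g^{\ell+1}$ when the total degree is at least $\ell+1$ and each is $\le 1/2$). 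The length $m_\ell$ of the power of $c_\bal$ is then just the maximal number of factors appearing in the words of degree $\le\ell+$ a fixed amount coming from the tail's leading terms, which is finite and depends only on $\ell$.

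The main obstacle I anticipate is twofold: first, making the parity vanishing $\cD_{(2a+1,b,c)}=0$ rigorous in the multipoint setting, since unlike the single-point case the resolvents $\pa{z-H(G_j)}^{-1}$ sit at different base points, so one must argue that each \emph{word} with an odd number of ``difference'' factors (the $G_i-G_j$ pairs inside $\bbH$, together with $\bbA$ and $g\bbL$ contributing the right counting) has vanishing contour integral — this presumably follows by rewriting everything at a common base point and using that $\oint_\cC(z-H(G_1))^{-1}[\text{odd product}](z-H(G_1))^{-1}\,\d z=0$, but the bookkeeping is delicate. Second, extracting uniformity in $\bal$ through $c_\bal$: one must carefully track where factors of $s_\bal^{-1}$ and $\max_j|\alpha_j|$ enter each of $\bbH,\bbA,\bbL$ and verify they are all captured by powers of $c_\bal=s_\bal^{-1}(1+\max_j|\alpha_j|)$, with an exponent $m_\ell$ that does not blow up — this is routine but error-prone. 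Everything else is a convergent-geometric-series estimate modeled on the proof of Proposition~\ref{prop:standard_pt}.
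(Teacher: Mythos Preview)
Your overall strategy---Neumann expansion of $\pa{1+\bbH-\Delm\bbA-g\bbL}^{-1}$, termwise integration along $\cC$, and tail estimation---is exactly the paper's approach, and the three basic norm bounds you identify are correct. Two points deserve comment.

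First, your main anticipated obstacle is not one. The vanishing $\cD_{(2a+1,b,c)}=0$ is a \emph{definition}, not a theorem: in~\eqref{eq:Dabc} the first index of $\cD_{(\cdot,\cdot,\cdot)}$ records the order in $\delta_{\bal,\bG}$, and since each occurrence of $\bbH$ contributes exactly two such powers, only even first indices arise naturally. The paper introduces the auxiliary objects $\widetilde{\cD}_{(a,b,c)}:=\cD_{(2a,b,c)}$ (indexed by number of factors rather than order) and simply re-sums $\sum_p\widetilde{\cD}_p=\sum_p\cD_p$. No residue or parity argument is needed, and your proposed ``rewriting at a common base point'' would be working hard for nothing.

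Second, there is a genuine gap in your argument. You assert that under $\delta_{\bal,\bG}+\delta_\bal+\delta_g\le 1/2$ the geometric series converges; but your own bounds carry a factor $c_\bal^{m}$, so the Neumann series is only guaranteed to converge when $\gamma\pa{\delta_{\bal,\bG}^2+\delta_\bal+\delta_g}<1$ with $\gamma\lesssim c_\bal$. For $c_\bal$ large this can fail while the stated hypothesis still holds. The paper closes this gap by treating the intermediate regime $(2\gamma)^{-1}\le\delta_{\bal,\bG}^2+\delta_\bal+\delta_g\le 1/2$ separately with a coarse argument: in that regime $\delta_{\bal,\bG}^{\ell+1+\xi_\ell}+\delta_\bal^{\ell+1}+\delta_g^{\ell+1}$ is bounded below by some $q_0>0$ and $c_\bal$ is bounded below by some $q_1>0$, so one simply takes $C_\ell$ large enough to absorb $\sup\nor{\cF(G)-\D_\ell}{e}$ over that set. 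You should add this case distinction; without it the constants $C_\ell,m_\ell$ cannot be chosen uniformly in $\bal$ as claimed.
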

% In~\eqref{eq:main_bound}, $\delta_{\ell \in 2\N}$ 
We give a proof in Section~\ref{sec:Proofs}. The main property of~\eqref{eq:main_bound} is that the power of $\delta_{\bal,\bG}$ is always even and one extra order of convergence is gained for $\ell$ even.
% the odd powers of $\delta_{\bal,\bG}$ do not appear in the expansion. 
As we will see in Section~\ref{ssub:Comparision to standard perturbation theory}, when $\cH$ is finite-dimensional the terms $\D_\ell$ and $\bbP_\ell = \sum_{p=0}^{\ell} \pt_p$ (which appears in standard perturbation theory) have the same computational cost in terms of numbers of matrix products. In consequence, $\D_\ell$ can be directly compared with $\bbP_\ell$ to observe in which regimes multipoint perturbation theory can lead to more accurate approximations than standard perturbation theory.
% so they are the terms that can be compared, 

\subsection{Case $\delta_\gw = 0$ and $\delta_\bal = 0$}\label{sub:case_efficient}
A particularly interesting situation is when $G \in \Aff \pa{G_j}_{j=1}^n$ and $\delta_\bal = 0$. Then we see in~\eqref{eq:main_bound} that for $\ell \in 2\N$, the error is of order $\ell +2$ in $\delta_{\bal,\bG}$ while the error in~\eqref{eq:bound_standard_pert} is of order $\ell +1$ in $G - G_1$. We thus gain one order of convergence. This is hence in the neighborhood of $\Aff \pa{G_j}_{j=1}^n$ and when $\delta_{\bal,\bG}$ is small that we can expect multipoint perturbation to be particularly performant. We will  illustrate this numerically in Section~\ref{sub:all_zero}.

\begin{remark}[Universality of multipoint approximation at zeroth order]
In the case of zeroth order multipoint approximation, a similar but different bound holds for any mapping and no special resolvent identity, particular to eigenvalue problems, is required. In this general case as well, the first order disappears as we explain here.

Consider $\cM$ and $\cN$ two normed vector spaces (the first one being real), with norms denoted by $\nor{\cdot}{\cM} $ and $\nor{\cdot}{\cN}$, and a $\cC^2$ map $f : \cM \rightarrow \cN$. Take some bounded subset $\cB \subset \cM$, there exists $C > 0$ such that for any $y,s \in \cB$,
\begin{equation}
	\label{eq:taylor}
	\nor{f(s) -  f(y) -  \pa{\d_y f}(s-y)}{\cN} \le C \nor{s-y}{\cM},
\end{equation}
where $\d_y f$ denotes the differential of $f$ at $s$. Take $x_1, \dots,x_n \in \cB$ and $\alpha_1,\dots \alpha_n \in \R$ such that $\sum_{j=1}^{n} \alpha_j = 1$, define $x := \sum_{j=1}^{n} \alpha_j x_j$, and assume that $x \in \cB$. We have then
% \begin{align*}
% &f(x) - \sum_{j=1}^{n} \alpha_j f(x_j) = f(x) - f(x_0) + \sum_{j=1}^{n} \alpha_j \pa{f(x_0) - f(x_j) }\\
% &\quad = f(x) - f(x_0) - \pa{\d_{x_0} f}(x-x_0) + \sum_{j=1}^{n} \alpha_j \pa{f(x_0) - f(x_j) + \pa{\d_{x_0}f}(x_j - x_0)}.
% \end{align*}
\begin{align*}
&f(x) - \sum_{j=1}^{n} \alpha_j f(x_j) = \sum_{j=1}^{n} \alpha_j \pa{f(x)  + \pa{\d_{x}f}(x_j - x)- f(x_j)}.
\end{align*}
By defining $\delta_{\bal,\bx} := \max_{j=1}^{n}\sqrt{|\alpha_j|} \, \nor{x-x_j}{\cM} \le n \max_{1 \le i,j \le n} \sqrt{|\alpha_j|} \ab{\alpha_i} \, \nor{x_i-x_j}{\cM}$, and using~\eqref{eq:taylor}, this yields the estimate
\begin{align*}
	\nor{f(x) - \sum_{j=1}^{n} \alpha_j f(x_j)}{\cN} &\le  C \sum_{j=1}^{n} \ab{\alpha_j} \nor{x - x_j}{\cM}^2  %\le  2 C \sum_{j=1}^{n} \ab{\alpha_j} \nor{x_j - x_0}{\cN}^2 \\
	\le  n C \delta_{\bal,\bx}^2.
\end{align*}
\end{remark}

\subsection{Exploiting the symmetry group of $H^0$}\label{sub:symmetry_gp}

If the Hamiltonian $H^0$ is invariant with respect to some symmetry, one can exploit it to
% Using the symmetry group of $H^0$, one can even 
approximate density matrices for many more admissible $G$'s, without additional knowledge. The corresponding principle is explained in the following. We introduce the symmetry group of $H^0$,
\begin{align*}
\cU := \acs{U \text{ bounded linear operator of } \cH, \; U^* U = \1,\; [H^0,U] = 0}.
\end{align*}
For instance if $H^0 = -\Delta$, $\cU$ is the Galilean group, i.e. the group spanned by translations and rotations. For any $U \in \cU$, and any $G \in \cG$, we have
\begin{align*}%\label{eq:res_galilean}
U \pa{z-H(G)}^{-1} U^* = \pa{z-U H(G) U^*}^{-1}= \pa{z-H( UGU^*)}^{-1},
\end{align*}
% By~\eqref{eq:res_galilean}, 
so by integration, $U \sol(G) U^* = \sol(UGU^*)$. Hence it is natural to use this information to extend the multipoint perturbation formula~\eqref{eq:main_formula}.
% to fit with a new $G_j$. 
% which is not necessarily in $\Aff \pa{G_j}_{j=1}^n$, considering that we know $\sol$ at $G_j$ for all $j \in \{1,\dots,n\}$. We define
% \begin{align*}
% \Gen (G_j)_{j=1}^n := \acs{\sum_{j=1}^{n} \alpha_j U_j G_j U_j^* \;\Big|\; (U_j)_{j=1}^n \in \cU^n, \alpha \in \R^n, \sum_{j=1}^{n} \alpha_j  =1}
% \end{align*}
% which we call the space generated by $(G_j)_{j=1}^n$, and it is not a vector space in general. 
Thus, denoting $\bU =(U_1,\ldots,U_n)$ with $U_i \in \cU$, 
% and replacing $G_j$ by $U_j G_j U_j^*$, 
defining
\begin{align*}
	\bbzH &:= \sum_{\substack{1 \le i < j \le n}} \alpha_i \alpha_j \pa{U_i G_i U_i^* - U_j G_j U_j^*}  U_j \pa{z - H(G_j)}^{-1} U_j^*\\
		  & \qquad \qquad \qquad \qquad \qquad \qquad \times \pa{U_i G_i U_i^* - U_j G_j U_j^*} U_i \pa{z - H(G_i)}^{-1} U_i^*, \\
\bbzA &:=s_\bal^{-1} \sum_{j=1}^{n} \alpha_j U_j G_{j} \pa{z - H(G_j)}^{-1} U_j^* , \quad \bbzL := s_\bal^{-1}\sum_{j=1}^{n}  \alpha_j U_j \pa{z - H(G_j)}^{-1} U_j^*,
\end{align*}
and $\gw_\bU := G - \sum_{j=1}^{n} \alpha_j U_j G_j U_j^*$, equation~\eqref{eq:main_formula} becomes
\begin{align}\label{eq:main_formula-2}
\pa{z - H\pa{G}}^{-1} = \bbzL \pa{1 + \bbzH - \Delm \bbzA - \gw_\bU \bbzL}^{-1}.
\end{align}
From this, the results obtained in~\eqref{eq:main_bound} can easily be extended to this setting as well, and
\begin{align*}
\nor{\sol(G) - \D_\ell^\bU}{e} \le C_\ell c_\bal^{m_\ell} \pa{\delta_{\bal,\bU\bG\bU^*}^{\ell +1 + \xi_\ell} + \delta_\bal^{\ell +1} + \delta_{\gw_\bU}^{\ell+1}},
\end{align*}
where ${\bU \bG \bU^* := \pa{U_j G_j U_j^*}_{j=1}^n}$ and where $\D_\ell^\bU$ is $\D_\ell$ under the above transformations, which can be recast as described in the upcoming~\eqref{eq:transfo}. Minimizing the previous bound over $\bal \in \R^n \backslash \{0\}$ and $\bU \in \cU^n$ provides a better estimate of $ \sol(G)$.

\section{Explicit computation of the first order terms}%
\label{sec:Firsts order}
In this section, we first present the first orders of $\pa{z-H(G)}^{-1}$ using the main formula~\eqref{eq:main_formula} and then the first orders of $\sol(G)$.

\subsection{Preliminary details}%
\label{sub:prelim}
As in Section~\ref{sub:Pseudo-inverses}, we denote by $\lambda^a(G)$ the $a\thh$ eigenvalue of $H(G)$, for any $G \in \cG$ and any $a \in \N\backslash\{0\}$. We start by defining the pseudo-inverses
\begin{align*}
K_{ij} := 
\left\{
\begin{array}{ll}
\pa{\pa{\lambda^k(G_i) - H(G_j)}_{\mkern 1mu \vrule height 2ex\mkern2mu \pa{\Ker\pa{\lambda^k(G_j) - H(G_j)} }^\perp}}^{-1} & \mbox{on }  \pa{\Ker \pa{\lambda^k(G_j) - H(G_j)}}^\perp\\
0 & \mbox{on }\Ker \pa{\lambda^k(G_j) - H(G_j)},
\end{array}
\right.
\end{align*}
extended to $\cH$ by linearity, which are going to be needed in the context of multipoint perturbation theory, which are well-defined if Assumption~\ref{as:convgi} holds. 
In the case of a finite-dimensional Hilbert space $\cH$, and with the same notation as in Section~\ref{sub:Pseudo-inverses}, there holds
\begin{align*}
K_{ij} = \sum_{\substack{1 \le a \le \dim \cH \\ a \neq k}}  \pa{\lambda^k(G_i) - \lambda^a(G_j)}^{-1} Q_{u_a(G_j)}.
\end{align*}
This is the form that can be used in practical implementations, if all the $u_a(G_j)$'s were stored during the step of finding the eigenmodes of $H(G_j)$ and building the $K_j$'s. Another way of applying $K_{ij}$ or $K_i$ to a vector is to solve a linear equation. 
Thus, in practice, the cost of computing $K_{ij}$ either via a direct method or by applying it to a vector is comparable to the computation of $K_i$, provided it is done using the same method.
% Thus, in practice and in any case, \gd{the computational of $K_{ij}$ is  $K_i$}
% computing $K_{ij}$ and $K_i$, or applying them to a vector, have the same computational cost.
Furthermore, we propose in Section~\ref{sub:Kij_from_Ki} in the Appendix a way to approximate the $K_{ij}$'s from the $K_i$'s.

Then, we define, for $A,B \in \cG$, and $a,b,c\in \{1,\ldots, n\}$ the following auxiliary quantities
\begin{align*}
	\cI_{a,b}(A) &:= \f{1}{2\pi i} \oint_\cC \pa{z-H(G_a)}^{-1} A \pa{z-H(G_b)}^{-1}  \d z, \\
\cI_{a,b,c}(A,B) &:= \f{1}{2\pi i} \oint_\cC  \pa{z-H(G_a)}^{-1} A \pa{z-H(G_b)}^{-1}  B \pa{z-H(G_c)}^{-1}\d z,
\end{align*}
and compute their explicit form.
\begin{proposition}\label{prop:integrals}
For $A,B \in \cG$, and $a,b,c\in \{1,\ldots, n\}$, and defining $P_j := \sol(G_j)$, we have
\begin{align}\label{eq:comp_cI}
\cI_{a,b}(A) &=P_a A K_{ab} + K_{ba} A P_b, \\
\cI_{a,b,c}(A,B) &= P_a A K_{ab} B K_{ac} + K_{ba} A P_b B K_{bc} + K_{ca} A K_{cb} B P_c  \nonumber \\
& \qquad \qquad - P_a A P_b B K_{ac} K_{bc} - P_a A K_{ab} K_{cb} B P_c - K_{ba} K_{ca} A P_b B P_c. \nonumber
\end{align}
\end{proposition}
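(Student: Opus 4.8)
The plan is to decompose each resolvent occurring in the two integrands into its polar part at the single eigenvalue enclosed by $\cC$ plus a holomorphic remainder, to expand the operator products, and to integrate the resulting summands one at a time by the residue theorem, recognising each residue as a pseudo-inverse $K_{ij}$. Concretely, I would fix $j$, write $\lambda_j := \lambda^k(G_j)$, $R_j(z) := \pa{z-H(G_j)}^{-1}$ and $P_j := \cF(G_j)$, and use that, since $G_j \in \Om$, the contour $\cC$ encircles $\lambda_j$ and no other point of $\sigma(H(G_j))$. Then the reduced resolvent $S_j(z) := R_j(z) - P_j/(z-\lambda_j)$ is holomorphic on a neighbourhood of the compact region bounded by $\cC$ (the eigenvalue being a simple pole by self-adjointness), and, decomposing $\cH = \ran P_j \oplus \pa{\Ker(\lambda_j - H(G_j))}^\perp$ and comparing with the definition of $K_{ij}$, a direct computation gives $S_j\pa{\lambda^k(G_i)} = K_{ij}$ for every $i$ — in particular $S_j(\lambda_j) = K_{jj} = K(G_j)$, and this persists even when $\lambda^k(G_i) = \lambda_j$. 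Note moreover that by Assumption~\ref{as:convgi} every $\lambda^k(G_i)$ lies in $[\lambda\ind{min},\lambda\ind{max}]$, hence strictly inside $\cC$, so all these are legitimate interior nodes.

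Two elementary ingredients are then needed. First, if $f$ is operator-valued, holomorphic inside $\cC$, and $\mu_1,\dots,\mu_m$ lie inside $\cC$, then $\frac{1}{2\pi i}\oint_\cC f(z)\,\bpa{(z-\mu_1)\cdots(z-\mu_m)}^{-1}\d z$ equals the divided difference $f[\mu_1,\dots,\mu_m]$; in particular it is $f(\mu_1)$ for $m=1$, it is $\pa{f(\mu_1)-f(\mu_2)}/(\mu_1-\mu_2)$ for $m=2$ with $\mu_1\neq\mu_2$ (resp. $f'(\mu_1)$ when $\mu_1=\mu_2$), and it vanishes whenever $f$ is constant and $m\ge 1$. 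Secondly, applying the first resolvent identity to $H(G_j)$ restricted to $\pa{\Ker(\lambda_j-H(G_j))}^\perp$ yields $\bpa{S_j(z_1)-S_j(z_2)}/(z_1-z_2) = -S_j(z_1)S_j(z_2)$ and $S_j'(z) = -S_j(z)^2$, so that $S_j[\lambda^k(G_a),\lambda^k(G_b)] = -K_{aj}K_{bj}$ irrespective of whether $\lambda^k(G_a)$ and $\lambda^k(G_b)$ coincide. All operator products below are well-defined through the form bounds defining $\cG$, exactly as for the terms $\cP_\ell$ in~\eqref{eq:dmpt}.

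For $\cI_{a,b}(A)$ I would insert $R_a(z) = P_a/(z-\lambda_a) + S_a(z)$ and $R_b(z) = P_b/(z-\lambda_b) + S_b(z)$ into the integrand: the $PP$ term has constant numerator $P_a A P_b$ and integrates to $0$; the $PS$ term integrates to $P_a A S_b(\lambda_a) = P_a A K_{ab}$; the $SP$ term to $K_{ba}A P_b$; the $SS$ term $S_a(z)AS_b(z)$ is holomorphic inside $\cC$ and integrates to $0$; summing gives the first line of~\eqref{eq:comp_cI}. For $\cI_{a,b,c}(A,B)$ the same substitution splits the integrand into $2^3=8$ summands, which I would group by the number of resolvents contributing their polar part. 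The one term with three polar parts has constant numerator and integrates to $0$. Each of the three terms with two polar parts and one $S$-factor — e.g. $P_a A P_b B\, S_c(z)\,\bpa{(z-\lambda_a)(z-\lambda_b)}^{-1}$ — integrates, by the divided-difference rule and the resolvent identity, to $-P_a A P_b B\,K_{ac}K_{bc}$, and cyclically to $-P_a A K_{ab} K_{cb} B P_c$ and $-K_{ba}K_{ca} A P_b B P_c$. Each of the three terms with one polar part and two $S$-factors — e.g. $P_a A\, S_b(z) B\, S_c(z)\,(z-\lambda_a)^{-1}$ — integrates to the evaluation $P_a A S_b(\lambda_a) B S_c(\lambda_a) = P_a A K_{ab} B K_{ac}$, and cyclically to $K_{ba} A P_b B K_{bc}$ and $K_{ca} A K_{cb} B P_c$. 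The term with no polar part is holomorphic inside $\cC$ and integrates to $0$. Adding the six surviving contributions gives the second line of~\eqref{eq:comp_cI}, which finishes the proof.

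The only step demanding genuine care is the bookkeeping when two, or all three, of the eigenvalues $\lambda^k(G_a), \lambda^k(G_b), \lambda^k(G_c)$ coincide, so that simple poles merge into multiple ones: this is exactly why I would organise the term-by-term integration through divided differences together with the uniform identity $S_j[\mu_1,\mu_2] = -S_j(\mu_1)S_j(\mu_2)$, rather than through an explicit case distinction on residues, which would otherwise fragment into several subcases. Everything else — the holomorphy of the $S$-factors in $z$, the identification $S_j(\lambda^k(G_i)) = K_{ij}$, and the well-definedness of the products — is routine.
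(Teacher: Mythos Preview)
Your proof is correct and follows essentially the same route as the paper: both decompose each resolvent as $P_j/(z-\lambda_j)$ plus the holomorphic reduced resolvent (your $S_j(z)$ is the paper's $R^\perp_j(z)$), identify $S_j(\lambda^k(G_i)) = K_{ij}$, expand the products, and integrate term by term via Cauchy's formula together with the resolvent identity $K_{ac}-K_{bc} = (\lambda^k(G_b)-\lambda^k(G_a))K_{ac}K_{bc}$. The only difference is in the handling of coincident eigenvalues: the paper first assumes them distinct, computes the residues explicitly, and then argues by continuity, whereas you organise the same computation through divided differences and the uniform identity $S_j[\mu_1,\mu_2] = -S_j(\mu_1)S_j(\mu_2)$, which covers all cases at once without a limiting argument.
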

We provide a proof in Section~\ref{sec:Proofs}.

% \subsection{firshh orders of~\eqref{eq:main_formula}}%
\subsection{First terms in the expansion of $\sol(G)$}\label{sub:exp_F}

We now compute the terms in~\eqref{eq:Dabc} up to order 2,
\begin{align}
\cD_{(0,0,0)} =\f{1}{2\pi i} \oint_\cC \bbL \d z =s_\bal^{-1} \sum_{j=1}^{n} \alpha_j \sol(G_j),
\end{align}
then $\cD_{(1,0,0)}=0$,
\begin{align*}
\cD_{(2,0,0)} = -\f{1}{2\pi i} \oint_\cC \bbL \bbH \d z = -s_\bal^{-2}\sum_{\substack{1 \le a < b \le n \\ 1 \le j \le n}} \alpha_j \alpha_a \alpha_b \cI_{j,b,a}(G_{ab},G_{ab}).
\end{align*}
Similarly, we have
\begin{align*}
	\cD_{(0,1,0)} &= \f{1}{2\pi i} \oint_\cC \bbL \gw \bbL \d z 
=s_\bal^{-2}\sum_{\substack{1 \le i,j \le n }} \alpha_i\alpha_j \cI_{i,j}(\gw)\\
\cD_{(0,0,1)} &= \f{\Delm}{2\pi i} \oint_\cC \bbL \bbA \d z 
=\Delm s_\bal^{-2}\sum_{\substack{1 \le i,j \le n}} \alpha_i\alpha_j \cI_{i,j}(G_j) \\ 
\cD_{(0,2,0)} &=  \f{1}{2\pi i} \oint_\cC \bbL \gw \bbL \gw \bbL \d z 
 =s_\bal^{-3}\sum_{\substack{1 \le i,j,k \le n}}  \alpha_i\alpha_j  \alpha_k \cI_{i,j,k}(\gw,\gw)  \\
\cD_{(0,0,2)} &= \Delm^2 \f{1}{2\pi i} \oint_\cC \bbL \bbA^2 \d z 
 =\Delm^2s_\bal^{-3}\sum_{\substack{1 \le i,j,k \le n}}  \alpha_i\alpha_j \alpha_k \cI_{i,j,k}(G_j,G_k) \\
\cD_{(0,1,1)} &=  \f{\Delm}{2\pi i} \oint_\cC \bbL \pa{\bbA \gw \bbL + \gw \bbL \bbA}  \d z \\
	     & = \Delm s_\bal^{-3}\sum_{\substack{1 \le i,j,k \le n}}  \alpha_i\alpha_j\alpha_k \pa{ \cI_{i,j,k}\pa{G_j,\gw} + \cI_{i,j,k}(\gw,G_k)}.
\end{align*}
Finally, we have 
\begin{align*}
	\cD_0 &= \cD_{\pa{0,0,0}}, \\
\cD_1 &= \cD_{\pa{0,1,0}} + \cD_{\pa{0,0,1}}\\
\cD_2 &= \cD_{(2,0,0)} + \cD_{(0,2,0)} + \cD_{(0,0,2)} + \cD_{(0,1,1)},
\end{align*}
and
\begin{align*}
\D_0 &= \cD_0, \qquad \D_1 = \cD_0 + \cD_1, \qquad \D_2 = \cD_0 + \cD_1 + \cD_2.
\end{align*}
The terms $\D_\ell$ for odd $\ell$ are not of primary practical interest because they do not improve the convergence rate of $\delta_{\bal,\bG}$ with respect to $\D_{\ell -1}$, as it is reflected in~\eqref{eq:main_formula}.

% \begin{remark}
% \label{rem:linear}
% The following map is linear,
% \begin{align*}
% \sol_0 : \left\{
% \begin{array}{rcl}
% 	\Aff \pa{G_j}_{j=1}^n  & \longrightarrow & \cL(\cH)  \\
% 	G = & \longmapsto & \D_0 = \sum_{j=1}^{n} \alpha_j \sol(G_j).
% \end{array}
% \right.
% \end{align*}
% % \cbs{Remark for myself (Ben): maybe add some comments how the metric in the G-space is translated to a metric in the F-space?!}
% \cgd{maybe add an extra comment ? Can this be useful?}
% \end{remark}
% \louis{By the way it's not rigorous, there is an intermediate map transforming $G$ to $\bal$, we should rather write}
% \begin{remark}
% \label{rem:linear}
% The following map is linear,
% \begin{align*}
% \sol_0 : \left\{
% \begin{array}{rcl}
% 	\R^n  & \longrightarrow & \cL(\cH)  \\
% 	\bal & \longmapsto & \D_0 = \sum_{j=1}^{n} \alpha_j \sol(G_j).
% \end{array}
% \right.
% \end{align*}
% \end{remark}
% \louis{Personally I would also not put this remark}

\subsection{Exploiting the symmetry group of $H^0$}

Recalling Section~\ref{sub:symmetry_gp} where we presented how to use the symmetry group of $H^0$, if we want to deduce the expansion of $\sol(G)$ as in Section~\ref{sub:exp_F}, we just need to change 
\begin{align}\label{eq:transfo}
G_j \rightarrow U_j G_j U_j^*, \qquad \qquad P_j \rightarrow U_j P_j U_j^*,  \qquad \qquad K_{ij} \rightarrow U_j K_{ij} U_j^*.
\end{align}
Moreover, we remark that $\Om$ remains unchanged.

% \begin{remark}
% \label{rem:linear}
% The map $\pa{\alpha_j}_{j=1}^n \mapsto \mapsto \D_{0}$ is \emph{affine} 
% because in the case where $s_\alpha =1$, 
% since 
% $\bs{G} \mapsto \D_0$ is linear, \bs{i.e.,
% \begin{align*}
% \sol_0 :
% \begin{array}{rcl}
	% \{ \pa{\alpha_j}_{j=1}^n \in \R^n, \sum_{j=1}^{n} \alpha_j = 1\} & \longrightarrow & \cL(\cH)  \\
% \bal & \longmapsto & \sol_0(\bal) := \sum_{j=1}^{n} \alpha_j \sol(G_j),
% \end{array}
% \end{align*}
% \cbs{Remark for myself (Ben): maybe add some comments how the metric in the G-space is translated to a metric in the F-space?!}
% \end{remark}
% \cbs{To me, here should be a statement or section stating the convergence orders of the different truncations and perturbation. Like a second main theorem that proves the convergence rates in the smallness parameters even if the proof is simple. Here, $\D_k$ can be related to the order $k$ justifying the notation $\D_k$. } \louis{It's already in the equation above the definition of $D_0$, where $D(a,b,c)$ is bounded}
% \cbs{I agree that all information is here, but in a numerics paper you want to write down explicit convergence rates that can be read from the paper just by browsing through it. IMO this is the main collorary of the main result. I expect something like
% \begin{equation}
%     \label{eq:conv-rates}
%     \nor{\sol(G) - \D_k}{e} \preceq 
%     \delta_{\bal,\bG}^{k+1} \delta_\gw^{k+1} \delta_\bal^{k+1},
% \end{equation}
% with appropriate definition of $\D_k$.
% }
% \louis{ok I'll change this. BTW is it a numerics papers ?}

\subsection{Complexity analysis}%
\label{sub:Complexity analysis}

In this section, we compare the complexity of the standard perturbation approximation and the multipoint perturbation method depending on the order of the expansion $\ell$ and the number of points $n$.
We denote by 
\begin{itemize}
	\item $m$ the cost of applying an operator of the kind $\left| \vp \right> \left< \p \right|$ to a vector, where $\vp, \p \in \cH$,
	\item $q$ the cost of applying $K_{ab}$ or $K_a$ to a vector in $\cH$,
	\item $p$ the cost of applying $\gw$ or $G_i$ to a vector in $\cH$.
\end{itemize}

Of course, this is intended for the case of a finite-dimensional Hilbert space. Further, the complexity of $q,p$ with respect to the dimension of the Hilbert space very much depends on the properties of the induced matrices. 
If the discrete setting results from a discretization of a problem posed in an infinite-dimensional Hilbert space, then these properties also depend much on the type of discretization method that is employed. We keep it therefore abstract in this analysis such that it can be assessed for each method individually.

\subsubsection{Using the symmetries of the $\cI$'s}%
\label{ssub:sym_Js}

To decrease the number of operations, we first exploit the symmetries of $\cI_{i,j}$ and $\cI_{i,j,k}$ and write
\begin{align*}
	\cD_{(0,1,0)} 
	= 2! s_\bal^{-2}\sum_{\substack{1 \le i < j \le n }} \alpha_i\alpha_j \cI_{i,j}(\gw) +  s_\bal^{-2}\sum_{i=1}^{n}  \alpha_i^2 \cI_{i,i}(\gw),
\end{align*}
reducing to $J_1 := \f 12 n (n+1)$ the number of operations in the two sums instead of $n^2$. Similarly,
\begin{multline*}
\cD_{(0,2,0)}=
3!  \sum_{\substack{1 \le i < j < k \le n}}  \alpha_i\alpha_j  \alpha_k \cI_{i,j,k}(\gw,\gw) + 2! \sum_{\substack{1 \le i < k \le n}}  \alpha_i^2  \alpha_k \cI_{i,i,k}(\gw,\gw) \\
 + 2! \sum_{\substack{1 \le i < j \le n}}  \alpha_i^2  \alpha_j \cI_{i,j,i}(\gw,\gw)+ 2! \sum_{\substack{1 \le i < j \le n}}  \alpha_i \alpha_j^2  \cI_{i,j,j}(\gw,\gw) +  \sum_{i=1}^{n}\alpha_i^3  \cI_{i,i,i}(\gw,\gw).
\end{multline*}
Using that 
\begin{align*}
    \sum_{\substack{1 \le i < j < k \le n}} 1= \sum_{k=1}^{n} \sum_{j=1}^{k-1} \sum_{i=1}^{j-1} 1 = \f{n}{6}\pa{n^2 -3n +4}
\end{align*}
brings the number of operations from $n^3$ to 
\begin{align*}
	J_2 := \f 16 n (n^2 - 3n +4) + 3 \times 2 \times \f 12 n(n-1) + n = \f 16 n \pa{n^2 - 15n - 8}.
\end{align*}
% operations instead of $n^3$. 
Note that the leading order remains the same but the preconstant can be significantly reduced. 
% Here, we used 
% \begin{align*}
%     \sum_{\substack{1 \le i < j < k \le n}} 1= \sum_{k=1}^{n} \sum_{j=1}^{k-1} \sum_{i=1}^{j-1} 1 = \f{n}{6}\pa{n^2 -3n +4}.
% \end{align*}

\subsubsection{Costs of intermediate quantities}%
\label{ssub:Costs of intermediate quantities}

For any operator $L$, we denote by $\mathfrak{C}(L) \in \N^2$ 
% \cgd{should be $\N^2$? I don't see where $\N_0$ is defined}
% \cbs{Internationally, it is not so clear whether 0 belongs to $\N$ or not.}
the cost of applying $L$ to a vector, the first component being the offline cost and the second one being the online cost. The offline computations are the ones that can be performed once the $G_j$'s are fixed and that do not depend on $G$. The online computations are the remaining ones that can only be performed once a $G$ is chosen. 
Note that our aim here is to apply $\bbP_\ell$ and $\D_\ell$ to a vector. 
% and want to apply an operator to a vector, the goal being to apply $\bbP_\ell$ and $\D_\ell$ to a vector. 
We separate the offline and online computations so that we can assess the complexity for both cases when the perturbation is applied to only one $G$ as well as if it is applied for many different $G$'s, such as in a many-query context, where the $G_j$'s are fixed once and for all and the computations need then to be done for many sample points $G$'s. 
% In such a many-query context, the offline cost has to be performed only once while the online cost for each new saple point $G$. 
Let us start with a typical example. For $a,b,j = 1,\ldots, n$, to apply $P_a G_j K_{ab}$ to a vector $\p\in\mathcal H$, we write
\begin{align*}
P_a G_j K_{ab} \p = \ps{u_k(G_a), G_j K_{ab} \p} u_k(G_a) =  \ps{K_{ab} G_j  u_k(G_a),   \p} u_k(G_a)
\end{align*}
so we can pre-compute $\vp_{a,b,j} := K_{ab} G_j  u_k(G_a)$ offline which costs $p + q$. Finally, one can compute $\ps{\vp,\p}$ online, this costs $m$, and we write $\co\pa{P_a G_j K_{ab}} = (p+q,m)$. Therefore we obtain for the following operators
\begin{align*}
	\co\pa{P_a G_j K_{ab}} &=\co\pa{K_{ab}G_j P_a } =  (p+q,m) \\
	\co\pa{P_a \gw K_{ab}} &= \co\pa{K_{ab} \gw P_a} = (0,p+q+m) \\
	\co\pa{\cI_{a,b}(G_j)} &= 2(p+q,m)\\
	\co\pa{\cI_{a,b}(\gw)} &= 2(0,p+q+m),
\end{align*}
using~\eqref{eq:comp_cI} for the expression of $\cI_{a,b}$.
We now compute the cost of the $\cI_{a,b,c}$'s. For instance, for the term $P_a \gw P_b G_j K_{ac} K_{bc}$, we have the decomposition
\begin{align*}
	P_a \gw P_b G_j K_{ac} K_{bc} \p = u_k(G_a) \ps{\gw u_k(G_a),u_k(G_b)} \ps{K_{bc} K_{ac} G_j u_k(G_b),\p},
\end{align*}
so that
\[
    \co\pa{P_a \gw P_b G_j K_{ac} K_{bc}} = \pa{2q+p,2m+p}.
\]
Similarly, we obtain
\begin{align*}
\co\pa{P_a \gw K_{ab} \gw K_{ac}} &= (0,m+2q+2p) \\
\co\pa{P_a \gw P_b \gw K_{ac} K_{bc}} &= (0,2(m+q+p)) \\
\co\pa{\cI_{a,b,c}(\gw,\gw)} &= (0,3(m+2q+2p) + 6(m+q+p)) = 3(0,3m+4q+4p),
\end{align*}
% \cgd{j'ai l'impression que pour la 2e ligne ci-dessus, on peut gagner
% \[
% \co\pa{P_a G P_b G K_{ac} K_{bc}} = (q,2m+q+2p)
% \]
% Ca pourrait peut-être permettre de gagner pour la 3e ligne.}
then
\begin{align*}
\co\pa{P_a G_j K_{ab} \gw K_{ac}} &= \co\pa{K_{ba} G_j P_b \gw K_{bc}} = (p+q,m+p+q) \\
\co\pa{K_{ca} G_j K_{cb} \gw P_c } &= (0,m+2p + 2q) \\
\co\pa{P_a G_j P_b \gw K_{ac} K_{bc}} &= (m+p,m+p+2q) \\
\co\pa{P_a G_j K_{ab} K_{cb} \gw P_c} &= (p+2q,2m+p)  \\
\co\pa{K_{ba} K_{ca} G_j P_b \gw P_c} &= (p+2q,2m+p)  \\
% \co\pa{\cI_{a,b,c}(G_j,G)} &= (m+5p+6q,8m+7p+6q),
\co\pa{\cI_{a,b,c}(G_j,\gw)} &= (m+5p+6q,8m+7p+6q),
\end{align*}
as well as
\begin{align*}
\co\pa{P_a \gw K_{ab} G_j K_{ac}} &= (0,m+2p+2q) \\
\co\pa{K_{ba} \gw P_b G_j K_{bc}} &=\co\pa{K_{ca} \gw K_{cb} G_j P_c } = (p+q,m+p+q) \\
\co\pa{P_a \gw P_b G_j K_{ac} K_{bc}} &= (p+2q,2m+p) \\
\co\pa{P_a \gw K_{ab} K_{cb} G_j P_c} &= (p+2q,2m+p)  \\
\co\pa{K_{ba} K_{ca} \gw P_b G_j P_c} &= (m+p,m+p+2q) \\
% \co\pa{\cI_{a,b,c}(G,G_j)} &= \co\pa{\cI_{a,b,c}(G_j,G)},
\co\pa{\cI_{a,b,c}(\gw,G_j)} &= (m+5p+6q,8m+7p+6q),
\end{align*}
and
\begin{align*}
\co\pa{P_a G_i K_{ab} G_j K_{ac}} &= \co\pa{K_{ba} G_i P_b G_j K_{bc}} =\co\pa{K_{ca} G_i K_{cb} G_j P_c } = (2p+2q,m) \\
\co\pa{P_a G_i P_b G_j K_{ac} K_{bc}} &= \co\pa{P_a G_i K_{ab} K_{cb} G_j P_c} =\co\pa{K_{ba} K_{ca} G_i P_b G_j P_c} = (m+2p+2q,m)  \\
\co\pa{\cI_{a,b,c}(G_i,G_j)} &= \pa{3m+12p + 12q,6m}.
\end{align*}

\subsubsection{Final costs}%
\label{ssub:Final costs}

With these different intermediate calculations, we are now ready to estimate the cost of computing the standard perturbation terms $\bbP_j$'s at few first orders. For this, using~\eqref{eq:dmpt}, and noting that no precomputation can be performed since the calculations involve $\gw = G-G_1$ which changes with the potential $G$, there holds
% ~\eqref{eq:comp_cI}, and~\eqref{eq:comp_cIabc},
% \begin{align*}
% 	\co\pa{\pt_1} &= \co\pa{\cI_{a,b}(G_j)} = 2(0,p+q+m), 
% 	\\ 
% 	\co\pa{\pt_2} &= \co\pa{\cI_{a,b,c}(G,G)} = 3(0,3m+4q+4p).
% \end{align*}
\begin{align*}
        \co\pa{\pt_0} &= (0,m),  \\
	\co\pa{\pt_1} &= 2(0,m+p+q), 
	\\ 
	\co\pa{\pt_2} &= 3(0,3m+4p+4q), \\
        \co\pa{\pt_3} &= 20(0,2m+3p+3q).
\end{align*}
Therefore we obtain for the standard perturbation method
\begin{align*}
	\co\pa{\bbP_0} &= \co\pa{\pt_0} = (0,m) \\
	\co\pa{\bbP_1} &= \co\pa{\bbP_0} + \co\pa{\pt_1} = (0,3m+2p+2q) \\
	\co\pa{\bbP_2} &= \co\pa{\bbP_1} + \co\pa{\pt_2} = 2(0,6m+7p+7q) \\
		\co\pa{\bbP_3} &= \co\pa{\bbP_2} + \co\pa{\pt_3} = 2(0,26m+37p+37q).
\end{align*}
Now for the multipoint perturbation method, we start by giving the cost estimation for the $\cD_{(a,b,c)}$'s using the explicit expressions provided in Section~\ref{sub:exp_F}, as well as the computations from Section~\ref{ssub:Costs of intermediate quantities} combined with Section~\ref{ssub:sym_Js}, which are, up to second order
% \begin{align*}
% \co\pa{\cD_{(0,0,0)}} &= \co\pa{\cD_{(0,0,0)}} =  n (0,m)  \\
% 	\co\pa{\cD_{(0,1,0)}} &= J_1 \co\pa{\cI_{a,b}(G)} =  n(n+1)(0,p+q+m)\\
% 	\co\pa{\cD_{(0,0,1)}} &= n^2 \co\pa{\cI_{a,b}(G_j)} = 2n^2(p+q,m)\\
% 	\co\pa{\cD_{(2,0,0)}} &= n^3 \co\pa{\cI_{a,b,c}(G_i,G_j)} = n^3 \pa{3m+12p + 12q,6m} \\
% 	\co\pa{\cD_{(0,2,0)}} &= J_2 \co\pa{\cI_{a,b,c}(G,G)} = \f 12 n \pa{n^2 - 15n - 8}(0,3m+4q+4p) \\
% 	\co\pa{\cD_{(0,0,2)}} &=  n^3 \co\pa{\cI_{a,b,c}(G_i,G_j)} = 3 n^3 (0,3m+4q+4p) \\
% 	\co\pa{\cD_{(0,1,1)}} &=  2n^3 \co\pa{\cI_{a,b,c}(G_i,G)} = 2n^3 (m+5p+6q,8m+7p+6q).
% \end{align*}
\begin{align*}
\co\pa{\cD_{(0,0,0)}} &=  n (0,m)  \\
	\co\pa{\cD_{(0,1,0)}} &= J_1 \co\pa{\cI_{a,b}(\gw)} =  n(n+1)(0,p+q+m)\\
	\co\pa{\cD_{(0,0,1)}} &= n^2 \co\pa{\cI_{a,b}(G_j)} = 2n^2(p+q,m)\\
	\co\pa{\cD_{(2,0,0)}} &= n^3 \co\pa{\cI_{a,b,c}(G_i,G_j)} = n^3 \pa{3m+12p + 12q,6m} \\
	\co\pa{\cD_{(0,2,0)}} &= J_2 \co\pa{\cI_{a,b,c}(\gw,\gw)} = \frac{1}{2} n \pa{n^2 - 15n - 8}(0,3m+4q+4p) \\
	\co\pa{\cD_{(0,0,2)}} &=  n^3 \co\pa{\cI_{a,b,c}(G_i,G_j)} = 
    3 n^3 (m+4p+4q,2m)
    % 3 n^3 (0,3m+4q+4p)
 \\
	\co\pa{\cD_{(0,1,1)}} &=  2n^3 \co\pa{\cI_{a,b,c}(G_i,\gw)} = 2n^3 (m+5p+6q,8m+7p+6q).
\end{align*}
Therefore, we obtain for the multipoint perturbation method up to second-order
\begin{align*}
	\co\pa{\D_0} &= n (0,m) \\
	\co\pa{\D_1} &= \co\pa{\D_0} + \delta_{\gw \neq 0} \co\pa{\cD_{(0,1,0)}} + \delta_{s_\bal \neq 1} \co\pa{\cD_{(0,0,1)}} \\
	\co\pa{\D_2} &= \co\pa{\D_1} + \co\pa{\cD_{(2,0,0)}} + \delta_{\gw \neq 0} \co\pa{\cD_{(0,2,0)}} + \delta_{s_\bal \neq 1} \co\pa{\cD_{(0,0,2)}} + \delta_{\gw \neq 0}\delta_{s_\bal \neq 1} \co\pa{\cD_{(0,1,1)}}.
\end{align*}
\begin{comment}
as for leading orders in $m$, $p$, $q$ we have
\begin{align*}
	\co\pa{\D_0} &= (0,nm) \\
	\co\pa{\D_1} &=  (0,nm) + n^2 \delta_{\gw \neq 0} (0, p+q+m) + 2n^2 \delta_{s_\bal \neq 1} (p+q,m)  \\
	\co\pa{\D_2} &= 3 n^3 \pa{m+4p + 4q,2m} + \delta_{\gw \neq 0} \f 12 n^3 (0,3m+4q+4p) \\
	& \qquad \qquad + \delta_{s_\bal \neq 1} 3 n^3 (0,3m+4q+4p) + \delta_{\gw \neq 0}\delta_{s_\bal \neq 1} (m+5p+6q,8m+7p+6q).
\end{align*}
\end{comment}
To lower again the computational cost of terms, one can use multipoint theory on $\Aff (G_j)_{j=1}^n$ and then standard perturbation until $G$, giving $\delta_{\gw} = 0$ for the first step. In this case, if we also use that $s_\bal = 1$, the computational cost of computing $\D_2$ is $\co\pa{\cD_{(2,0,0)}} = 3 n^3 \pa{m+4p +4q,2m}$ at leading order.
In order to reduce $n$, one could also consider to take only the $G_j$'s which are closest to~$G$.

\subsubsection{Comparison to standard perturbation theory}%
\label{ssub:Comparision to standard perturbation theory}

We now summarize the computational complexity and approximation order of the standard and multipoint perturbation theories in the following tables~\ref{tab:standard_pert} and~\ref{tab:multipoint}, 
% where we indicate the com the standard perturbation theory to the multipoint perturbation theory
the latter being taken in the regime where ${\delta_\alpha=\delta_\gw=0}$ since it is the most efficient scenario for this method.
% \begin{table}[h]
%     \centering
% \begin{center}
%     \begin{tabular}{|r|c|c|c|c|}
%     \hline
%         Expansion order $\ell$ & 0 & 1 & 2 & 3  \\ \hline
%         Offline complexity & 0 & 0 & 0 & 0 \\ \hline
%         Online complexity & $m$ & $3m+2(p+q)$ & $2(5m+7(p+q))$ & $2(26m + 67(p+q))$  \\ \hline
%         Approximation order & 1 & 2 & 3 & 4 \\ \hline 
%     \end{tabular}
% \end{center}
% \caption{Computational cost for standard perturbation theory}
% \label{tab:standard_pert}
% \end{table}
\begin{table}[h]
    \centering
\begin{center}
    \begin{tabular}{|r|c|c|c|c|}
    \hline
        Expansion order $\ell$ & 0 & 1 & 2 & 3  \\ \hline
        Offline complexity & 0 & 0 & 0 & 0 \\ \hline
        Online complexity & $m$ & $2m+2p+2q$ & $12m+14p+14q$ & $52m+74p+74q$  \\ \hline
        Approximation order & 1 & 2 & 3 & 4 \\ \hline 
    \end{tabular}
\end{center}
\caption{Computational cost for standard perturbation theory}
\label{tab:standard_pert}
\end{table}

\begin{table}[h]
    \centering
\begin{center}
    \begin{tabular}{|r|c|c|c|}
        \hline
        Expansion order $\ell$ & 0 & 1 & 2  \\ \hline
        Offline complexity & 0 & 0 & $n^3(3m+12p+12q)$  \\ \hline
        Online complexity & $nm$ & $nm$ & $nm(1+6n^2)$  \\ \hline
        Approximation order & 2 & 2 & 4 \\ \hline 
    \end{tabular}
\end{center}
    \caption{Multipoint perturbation theory with $(\delta_\alpha=\delta_\gw=0)$}
    \label{tab:multipoint}
\end{table}

% \begin{table}[h]
%     \centering
% \begin{center}
%     \begin{tabular}{|r|c|c|c|}
%         \hline
%         Expansion order $\ell$ & 0 & 1 & 2  \\ \hline
%         Offline complexity & 0 & 0 & $n^3(3m+12(p+q))$  \\ \hline
%         Online complexity & $nm$ & $nm$ & $nm(1+6n^2)$  \\ \hline
%         Approximation order & 2 & 2 & 4 \\ \hline 
%     \end{tabular}
% \end{center}
%     \caption{Multipoint perturbation theory with $(\delta_\alpha=\delta_\gw=0)$}
%     \label{tab:multipoint}
% \end{table}
% \cbs{Is it true that $\co\pa{\D_0},\co\pa{\D_1}$ are not relying on $p$ and $q$ operations?}\louis{Yes because it's just zeroth order in this case} \cgd{basically, $\D_1-\D_0=0$}

% Since the smallness parameter is fixed and given for a particular $G$
Let us now compare the approximations at second order within the framework of the two theories. This corresponds to the second column of Table~\ref{tab:standard_pert} for the standard perturbation theory and the first column of Table~\ref{tab:multipoint} for the multipoint perturbation theory.
In both cases, the offline cost is zero. The online costs are respectively $3m+2p+2q$ and $nm$ respectively. Therefore, as long as $nm \le 3m+2p+2q $, multipoint perturbation theory is more efficient than standard perturbation theory, note however that the multipoint perturbation theory is restricted here to the case $\delta_\alpha=\delta_\gw=0$.

Similarly, at fourth order, we need to look at the fourth column of Table~\ref{tab:standard_pert} and third column of Table~\ref{tab:multipoint}, and we observe that multipoint perturbation is more efficient than standard perturbation when 
% $nm(1+6n^2) \le 2(26m + 67(p+q))$, 
$nm(1+6n^2) \le 52m+74p+74q$
% $52m+74p+74q$
which in particular holds true when $p$ and $q$ are large.

% For second order approximations, multipoint theory with $\delta_\alpha=\delta_\gw=0$ is hence more efficient than standard perturbation when $nm \le 3m+2(p+q) $. 
% Similarly, when one wants to obtain approximations of order $4$, multiperturbation is more efficient, with respect to online cost, when $nm(1+6n^2) \le 2(26m + 67(p+q))$.

\section{Numerical examples}
\label{sec:simus}

In this section we apply the multipoint perturbation method to Schrödinger operators. We then observe in which domain of $\bal$ and $\gw$ it is efficient. Finally, we conclude by presenting test cases where multipoint perturbation is more efficient than standard perturbation.

We consider a spatial domain $\Omega := [-\pi,\pi[$, and the Hilbert space will be the set of one-dimensional periodic functions ${\cH = L^2\ind{per}(\Omega)}$, ${H^0 = -\Delta}$, and we will define the admissible set $\cG$ as the set of multiplication operators by smooth potentials $V \in \cC^{\infty}\ind{per}(\Omega)$. We use a planewaves basis discretization 
\[
    \cH_M := \text{Span}\acs{ x\mapsto e^{i x m}, \; m\in\Z, \;  |m| \le \floor{M/2}}.
\]
The discretization parameter $M$ is taken to be $30$ and fixed throughout the numerical section.
In Section~\ref{sub:Convergence of the quantities} of the appendix, we show that this value of $M$ is large enough  so that the studied quantities can be considered as converged with respect to $M$.

We define the following norm and corresponding relative distances
\begin{align*}
	\nor{A}{2} &:= \sqrt{\tr A^* A} \\
	d_\kappa(A,B) &:= \f{2\nor{\pa{H^0 + 1}^{\f \kappa 2} \pa{A-B} \pa{H^0 + 1}^{\f \kappa 2}}{2}}{\nor{\pa{H^0 + 1}^{\f \kappa 2} A \pa{H^0 + 1}^{\f \kappa 2}}{2}+ \nor{\pa{H^0 + 1}^{\f \kappa 2} B \pa{H^0 + 1}^{\f \kappa 2}}{2}} \\
	D_e(A,B) &:= d_1(A,B), \\
        D_a(A,B) &:= d_{-1}(A,B)
\end{align*}
where $A$ and $B$ are complex square matrices. Thus $D_e$ is the relative distance in energy norm and $D_a$ in dual norm. We use $\nor{\cdot}{2}$ instead of the supremum norm $\nor{\cdot}{\cH \rightarrow \cH}$ because in finite dimension they are equivalent, and $\nor{\cdot}{2}$ is numerically cheaper and simpler to evaluate.

We define the potentials $V_1$ and $V_2$ as periodized Gaussian functions, and $V_3$ and $V_4$ as superposition of $\cos$ and $\sin$ functions as represented in Figure~\ref{fig:pots}.

% \begin{center}
% \begin{tabular}{cc}
% \begin{tabular}{c}
% \begin{align*}
	% $V_1(x) := - \sum_{r \in 2\pi \Z} e^{-\f{\ab{x - r- \pi}^2}{2}}$, \\ 
	% $V_2(x) := \cos\pa{-7x}$, \\
	% $V_3(x) :=  \sin\pa{2x}$, \\
	% $V_4(x) := \cos\pa{x}$
% \end{align*}
% \end{tabular}
% &
% \begin{tabular}{c}
% \includegraphics[width=10cm,trim={0cm 0cm 0cm 0cm},clip]{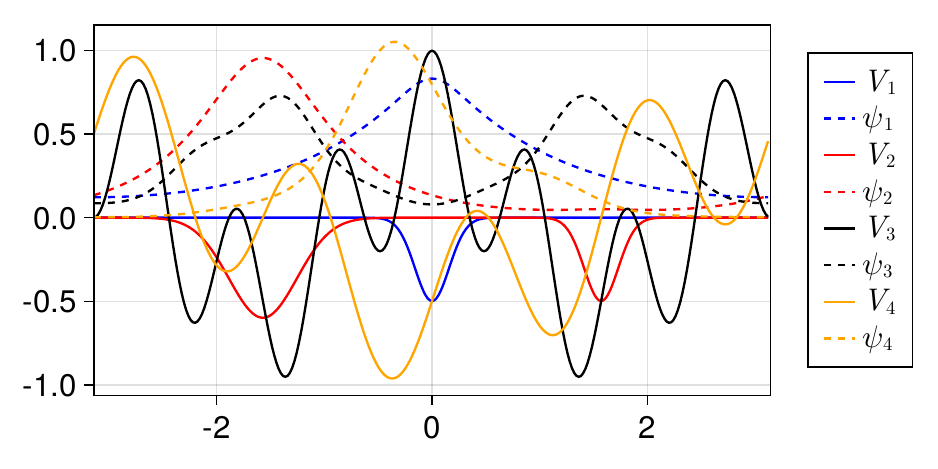} % Trim : gauche, bas, droite, haut
% \caption{Error $\ln_{10} D(..,..)$ against $\bal = \pa{\alpha_1,\alpha_2}$, $\alpha_1$ being on the $x$ axis and $\alpha_2$ on the $y$ axis. Add perturbation theory at first and second order}\label{fig:delta_alpha}
% \end{tabular}
% \end{tabular}
% \end{center}

\begin{figure}[h]
\begin{center}
\includegraphics[width=12cm,trim={0cm 0cm 0cm 0cm},clip]{figs/pots.pdf} % Trim : gauche, bas, droite, haut
\caption{Potentials $V_j$, $j \in \{1,2,3,4\}$ together with the ground states $\psi_j$ of $H^0+V_j$.}\label{fig:pots}
\end{center}
\end{figure}

Before starting the presentation of the numerical results, let us quickly recall the definitions of the three smallness parameters
\begin{align*}
	\delta_{\bal,\bG} = \mymax{1 \le i,j \le n}\sqrt{|\alpha_i\alpha_j|} \nor{G_i - G_j}{a},\qquad 
	\delta_\bal = \ab{1 - \sum_{j=1}^{n} \alpha_j}, \qquad 
	\delta_\gw = \nor{\gw}{a},
\end{align*}
with $\gw = G - \sum_{j=1}^{n} \alpha_j G_j$.

\subsection{Choice of closest $G_j$ for standard perturbation theory}\label{sub:closest_Gj}

When there are several $G_j$'s, standard perturbation theory approximates $\sol(G)$ by using only one of those $G_j$'s for $j \in \{1,\dots,n\}$, so we need a way to choose it. Since there is \textit{a priori} no way to know which one is going to provide the best approximation, a possible simple choice is to take $G_j$ with
\begin{align}\label{eq:choice_j_normal}
j = \argmin{1 \le i \le n} \nor{G - G_i}{e}.
\end{align}
This choice leads to singularities in plots, because the index $j$ can change non-smoothly as we change the parameters $\bal$ in the definition of $G$. The standard perturbation theory approximation of order $\ell$ using $G_j$ is then 
\[
    \bbP^j_\ell := \sum_{p=0}^{\ell} \pt_p^j,
\]    
where $\pt_p^j$ is as in~\eqref{eq:dmpt} but with the index $1$ replaced by $j$. By fairness, as a choice of $j$ we took
\begin{align}\label{eq:choice_j_fair}
	j_\ell = \argmin{1 \le i \le n} \nor{\sol\pa{G_i} - \bbP^i_\ell}{e},
\end{align}
which is the best one that standard perturbation theory provides, for each $\ell$, and provides continuous density matrices as well as $\bbP^j_\ell$ as parameters change.
% \cgd{here, it depends of what we plot. $j$ would not be continuous}\louis{indeed j is not continuous but $\bbP^j_\ell$ is so, and this is ensured by theory}

\subsection{Multipoint approximation when ${\delta_\gw = 0}$ and $\delta_{\bal,\bG} \rightarrow 0$}\label{sub:all_zero}

We study the behaviour of the multipoint approximation as $G \in \Aff \pa{G_j}_{j=1}^n$ and as $G_j \rightarrow G_1$ for all $j \in \{2,\dots,n\}$.
More precisely, we take $n=4$, $\bal = \pa{-0.3,0.4,0.3,0.6}$, and 
\begin{align*}
G_1 := V_1, \qquad G_j := V_1 + \ep V_j,\qquad  j \in \{2,3,4\}
\end{align*}
 and $G := \sum_{j=1}^{n} \alpha_j G_j$. Then, as $\ep \rightarrow 0$, we have $\delta_{\bal,\bG} = c \ep$ for some $c > 0$ independent of~$\ep$. In Figure~\ref{fig:comp_dmpt}, we display the error $D_e\pa{\sol_{\text{exact}} , \sol_{\text{approx}}}$ between the exact density matrix 
\begin{align*}
\sol_{\text{exact}} := \sol\pa{\sum_{j=1}^{n} \alpha_j G_j}
\end{align*}
 and several approximating quantities $\sol_{\text{approx}}$, i.e., the zeroth and first order of multipoint perturbation, as well as the zeroth, first, second and third orders of standard perturbation theory given in~\eqref{eq:dmpt} for comparison. The numerical results are displayed on Figure~\ref{fig:comp_dmpt}, where we simulate two different cases, namely $\delta_\bal = 0$ on the left and $\delta_\bal > 0$ on the right.

\begin{figure}[h]
\begin{center}
\includegraphics[width=5cm,trim={0cm 0cm 0cm 0cm},clip]{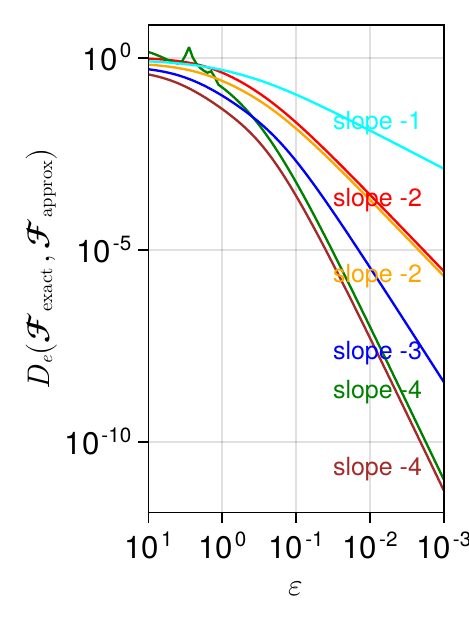} % Trim : gauche, bas, droite, haut
\includegraphics[width=5cm,trim={0cm 0cm 0cm 0cm},clip]{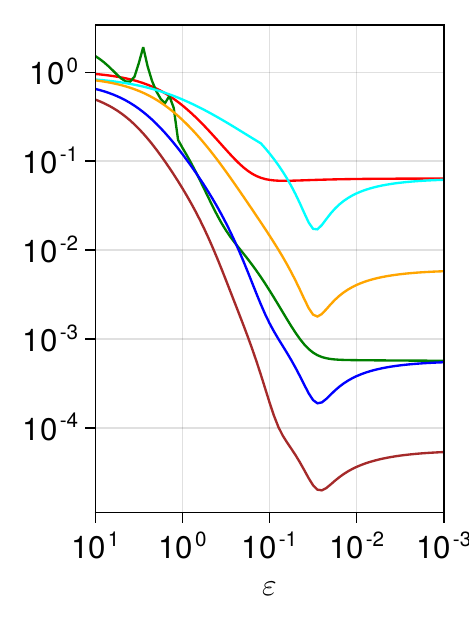} % Trim : gauche, bas, droite, haut
\includegraphics[width=3cm,trim={2.5cm 0cm 0cm 0cm},clip]{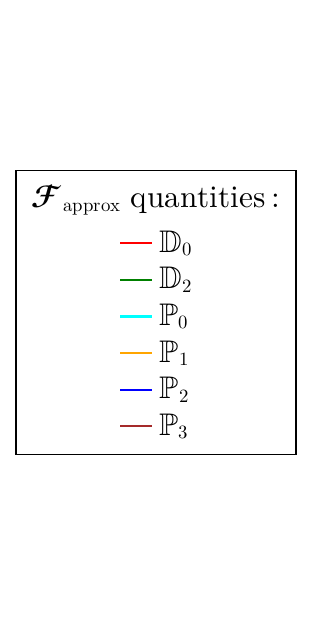} % Trim : gauche, bas, droite, haut
\caption{Comparision of the exact solution $\sol_{\text{exact}}$ with the approximations $\sol_{\text{approx}}$ based on multipoint perturbation theory~\eqref{eq:Dabc} and based on standard perturbation theory~\eqref{eq:dmpt}. (Left) $\bal = (-0.3,0.4,0.3,0.6)$ so that ${s_\bal = 1}$. The slopes of the asymptotic curves are indicated with the same color as the plot. (Right) $\bal = (-0.4,0.5,0.4,0.7)$ so ${s_\bal = 1.2 \neq 1}$. 
}\label{fig:comp_dmpt}
\end{center}
\end{figure}

As expected by the theoretical results presented in Section~\ref{sub:case_efficient}, we observe that the orders of convergence for the perturbative expansion are the expected ones, in particular the linear approximation of $\sol$ (i.e. $\D_0$) which corresponds to the zeroth expansion order in the multipoint perturbation theory is asymptotically of same accuracy as the first expansion order in standard perturbation theory when $\delta_{\bal,\bG} \rightarrow 0$. Comparing the slopes of the error plots, we observe that the bound~\eqref{eq:main_bound} is sharp in the case $\delta_\bal=0$.

% We commented about this situation in Section~\ref{sub:case_efficient}. We observe that the orders of perturbation are indeed the expected ones, in particular the linear approximation of $\sol$ (i.e. $\D_0$, see Remark~\ref{rem:linear}) is asymptotically of same accuracy than first order perturbation theory when $\delta_{\bal,\bG} \rightarrow 0$. We observe that the bound~\eqref{eq:main_bound} is sharp in this case.

\subsection{Multipoint approximation when $\delta_\gw = 0$, $\delta_\bal \neq 0$ and $\pa{G_j}_{j=1}^n$ constant}
\label{sub:delta_alpha}

We now consider $G$ as a linear combination of the $G_j$'s but not necessarily an affine one, i.e. the sum of $\alpha_j$'s can be different from one. For this purpose,  take 
\begin{align*}
	G_1 := V_1, \qquad G_2 := V_1 + \f{V_2}{5}. 
\end{align*}
In this situation, we show in Figure~\ref{fig:delta_alpha} the errors between the exact density matrix $\sol(\alpha_1 G_1 + \alpha_2 G_2)$ and approximation thereof using standard perturbation theory and multipoint perturbation theory while $\bal = (\alpha_1,\alpha_2)$ changes. 
As expected, we mainly observe that multipoint perturbation is more efficient than standard perturbation on the neighborhood of $\Aff \pa{G_1,G_2}$ which is indicated by the dotted line and corresponds to the case $\alpha_1+\alpha_2=1$, when we compare approximations of similar orders. More precisely, the error with $\D_0$ as the approximation is smaller than the error with $\bbP_0$ and the error with $\D_2$ is smaller than the error with $\bbP_2$ around $\Aff \pa{G_1,G_2}$.

\begin{figure}[h]
% Text
\hspace*{-0cm}
\begin{tikzpicture}[scale=1]
\node at (-6,0) {$\bbP_0$};
\node at (-3,0) {$\bbP_1$};
\node at (-0.3,0) {$\bbP_2$};
\node at (2.5,0) {$\D_0$};
\node at (5.5,0) {$\D_1$};
\node at (8,0) {$\D_2$};
\end{tikzpicture}
\hspace*{-1cm}
\includegraphics[width=18cm,trim={0cm 1.5cm 0cm 0.3cm},clip]{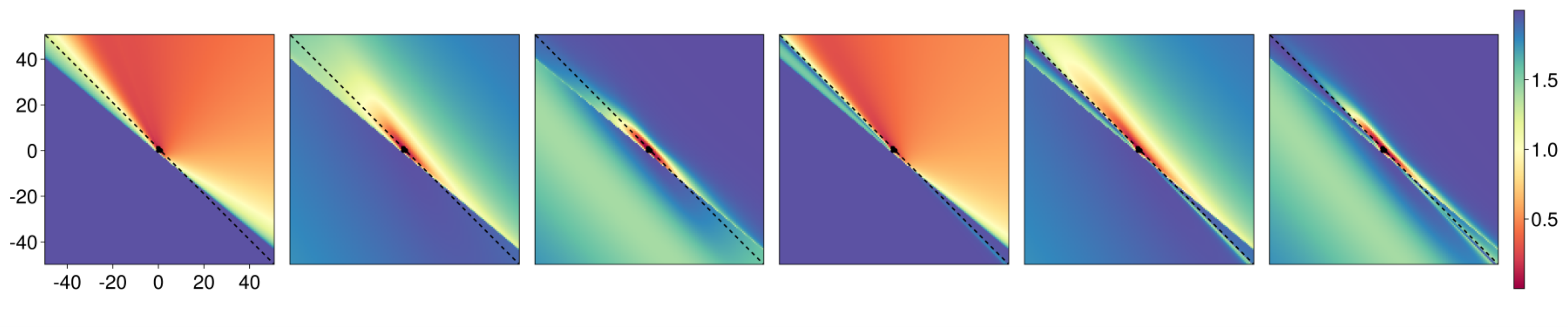} % Trim : gauche, bas, droite, haut
\hspace*{-1cm}
\includegraphics[width=18cm,trim={0cm 1.5cm 0cm 0.3cm},clip]{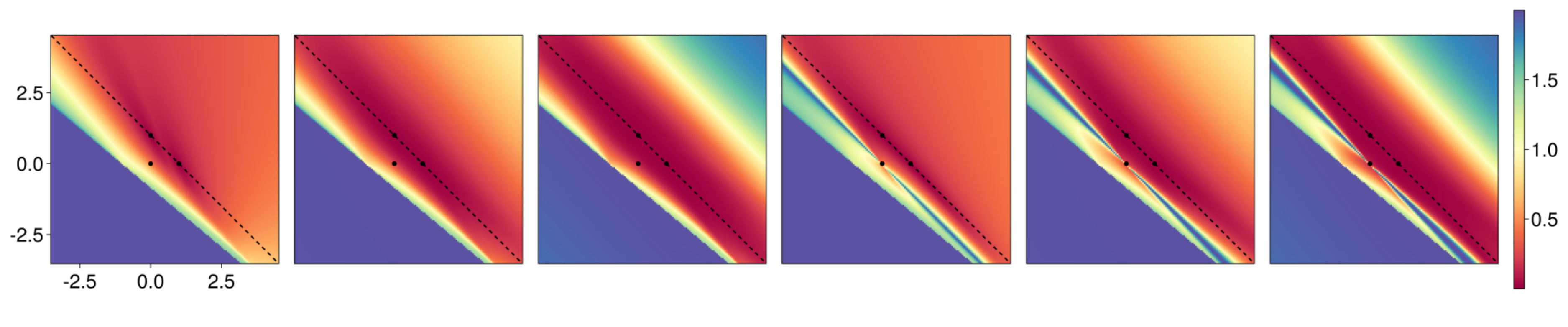} % Trim : gauche, bas, droite, haut
\hspace*{-1cm}
\includegraphics[width=18cm,trim={0cm 1.5cm 0cm 0.3cm},clip]{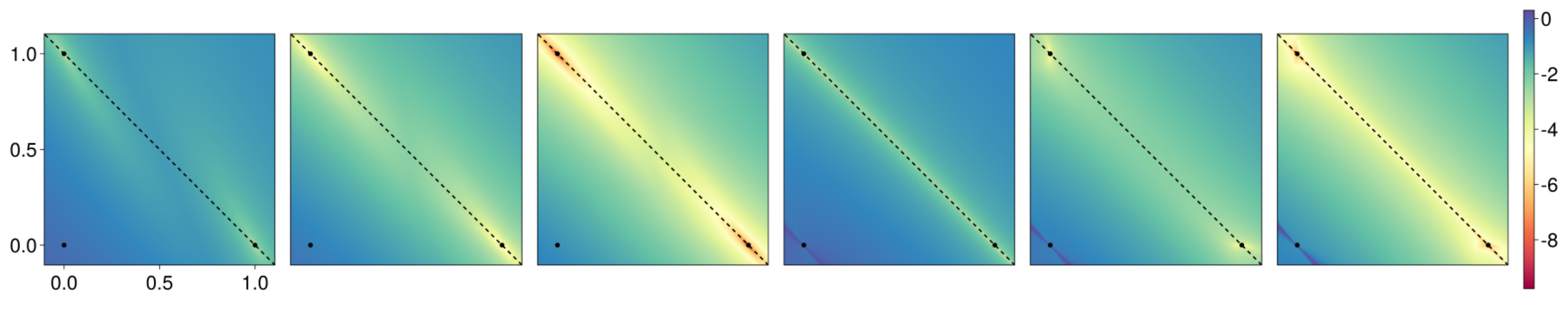}\\% Trim : gauche, bas, droite, haut
\hspace*{-1cm}
\includegraphics[width=18cm,trim={0cm 1.5cm 0cm 0.3cm},clip]{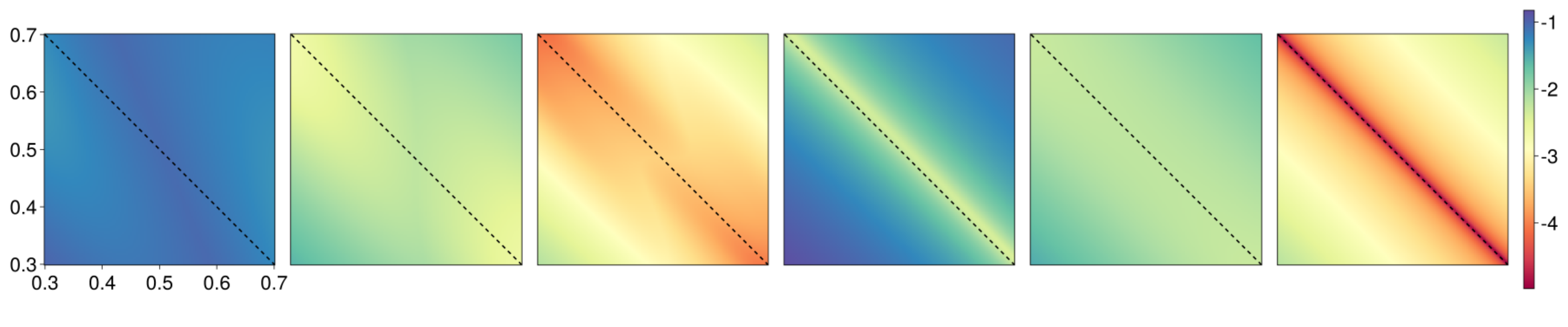} % Trim : gauche, bas, droite, haut
\hspace*{-1cm}
\includegraphics[width=18cm,trim={0cm 1.5cm 0cm 0.3cm},clip]{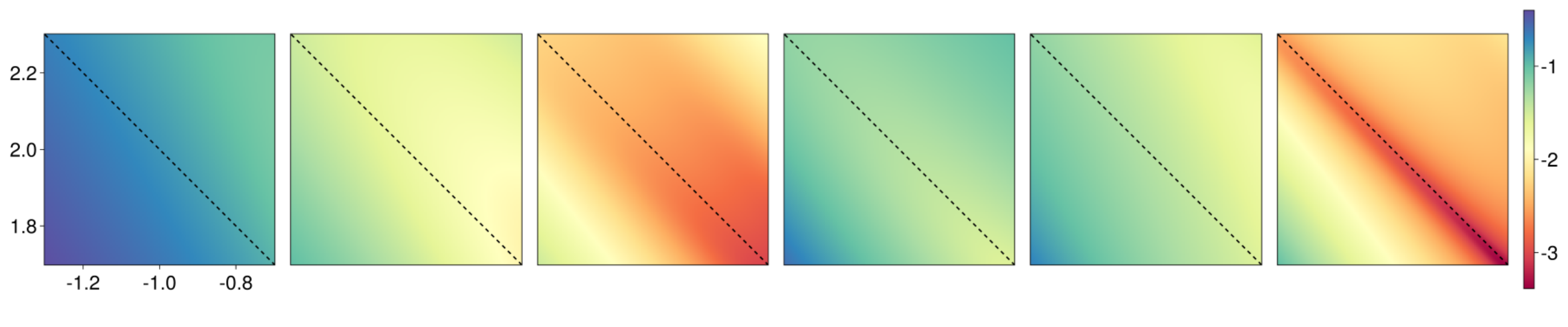}~% Trim : gauche, bas, droite, haut
\caption{Error ${E(\sol_{\text{approx}}) := D_e\pa{\sol_{\text{exact}}, \sol_{\text{approx}}}}$, the $x$-axis is $\alpha_1$ and the $y$-axis is $\alpha_2$, forming $\bal = \pa{\alpha_1,\alpha_2}$. From left to right, the quantities $\sol_{\text{approx}}$ are respectively $\bbP_0$, $\bbP_1$, $\bbP_2$, $\D_0$, $\D_1$ and $\D_2$ as indicated on top. From top to bottom, we vary the range of $\alpha_1$ and $\alpha_2$, which is indicated on the left figure for each row. The two first rows indicate $E(\sol_{\text{approx}})$ while the three last ones draw $\log_{10} E(\sol_{\text{approx}})$. On the right of each row, we have the scale giving the correspondency between colors and values. We plot the particular points $\bal$ equal to $(0,0)$, $(0,1)$ and $(1,0)$, and we plot the line $\alpha_1 + \alpha_2 = 1$ with a dotted line.
}\label{fig:delta_alpha}
% \end{center}
\end{figure}

\subsection{Multipoint approximation when $\delta_\gw = 0$, $\delta_\bal \rightarrow 0$ and $\pa{G_j}_{j=1}^n$ constant.}

We take the exact same situation as previously in Section~\ref{sub:delta_alpha}, that is
\begin{align*}
	G_1 := V_1, \qquad G_2 := V_1 + \f{V_2}{5}. 
\end{align*}
and study the limit $\delta_\bal \rightarrow 0$. We take $\bal := \pa{\f 12 + \ep, \f 12 + \ep}$ and display in Figure~\ref{fig:delta_alpha_dist} the relative errors $D_e\pa{\sol_{\text{exact}}, \sol_{\text{approx}}}$ against $\ep \rightarrow 0$. 
This corresponds to a zoom of Figure~\ref{fig:delta_alpha} around the affine space at $(0.5,0.5)$ along the diagonal $\alpha_1=\alpha_2$.

Further, the plateau signifies the regime where the error in $\delta_{\bal,\bG}$ (i.e. $\delta_{\bal,\bG}^{\ell +1 + \xi_\ell}$) dominates the one introduced by $\delta_\bal$ (i.e. $\delta_{\bal}^{\ell +1}$).
We observe that when $\ep$ is small enough (about $10^{-2}$), the error for the approximations based on multipoint perturbation are about one order of magnitude smaller than the errors for the approximations of similar approximation order using the standard perturbation method.

\begin{figure}[h]
\begin{center}
\includegraphics[width=13cm,trim={0cm 0cm 0cm 0cm},clip]{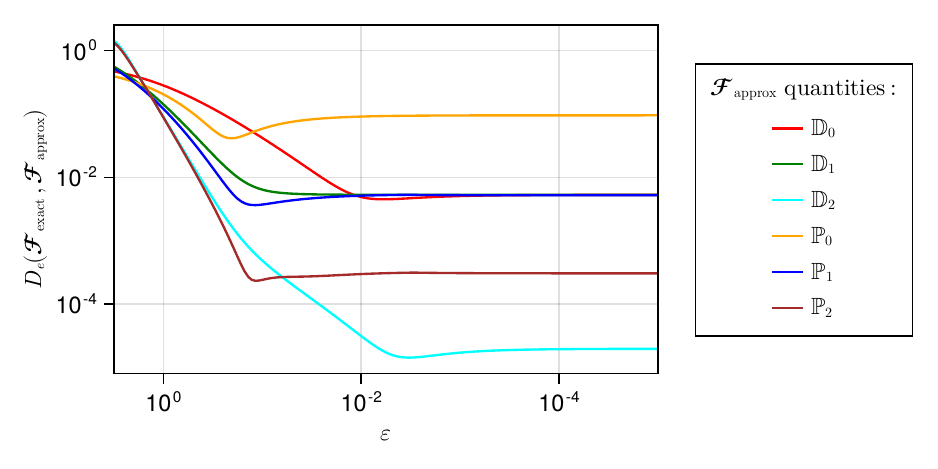} % Trim : gauche, bas, droite, haut
\caption{Relative errors $D_e\pa{\sol_{\text{exact}}, \sol_{\text{approx}}}$ with $\sol_{\text{approx}}$ taken as $\D_0,\D_1,\D_2,\bbP_0,\bbP_1,\bbP_2$ as $\delta_\bal \rightarrow 0$, when $\delta_\gw = 0$ and $(G_1,G_2)$ is constant. We parametrize $\bal = \pa{\f 12 + \ep, \f 12 + \ep}$.}\label{fig:delta_alpha_dist}
\end{center}
\end{figure}

\subsection{Multipoint approximation when $\delta_\gw \rightarrow 0$, with $\pa{G_j}_{j=1}^n$ constant}
\label{ssec:MP_deltag_small}

Let us take $n=2$,
\begin{align*}
G_1 := V_1, \qquad G_2 := V_1 + \f{V_2}{5}, \qquad \widetilde{G}_3 := V_1 + \f{V_3}{5},
\end{align*}
and for $(\beta_1,\beta_2) \in \R^2$, define $G := \beta_1 G_1 + \beta_2 G_2 + \ep \widetilde{G}_3$, where $\ep$ is going to converge to zero. Originally, one only knows $G$ and $G_j$, one does not know how $G$ was built, so we need to choose a way of finding $\bal = \pa{\alpha_j}_{j=1}^2$ to then build $\gw := G - \sum_{j=1}^{2} \alpha_j G_j$ and use the multipoint perturbation formula via the expansion in the $\D_\ell$'s. We consider three different ways of doing so.
First, we consider the minimization problem
\begin{align}\label{eq:opt_alpha}
	\mymin{\bal \in \R^2} \pa{D_a\pa{G,  \begingroup\textstyle\sum\endgroup_{j=1}^{2}\alpha_j G_j}^2 + \xi \ab{1 - \begingroup\textstyle\sum\endgroup_{j=1}^{2} \alpha_j}^2   }, 
\end{align}
and our three different ways of finding $\bal$ will correspond to the optimizers of~\eqref{eq:opt_alpha} with $\xi \in \{0,1,+\infty\}$. With an abuse of notation, the case $\xi = +\infty$ will refer to the problem
\begin{align*}
\mymin{\bal \in \R^2 \\ \sum_{j=1}^{n} \alpha_j  = 1} D_a\pa{G,\begingroup\textstyle\sum\endgroup_{j=1}^{2} \alpha_j G_j}^2.
\end{align*}
In Figure~\ref{fig:min_methods}, we plot the relative error  quantities $D_e\pa{\sol_{\text{exact}}, \sol_{\text{approx}}}$ against $\ep$, resulting in $\delta_\gw \rightarrow 0$, for the three values of $\xi$, and for two values of $(\beta_1,\beta_2)$, one for which $\beta_1 + \beta_2 \neq 1$ and the other one respecting $\beta_1 + \beta_2 = 1$. It seems that in any case, the best choice for $\xi$ is $0$.

\begin{figure}[h]
\begin{center}
\includegraphics[width=5cm,trim={0cm 0cm 0cm 0cm},clip]{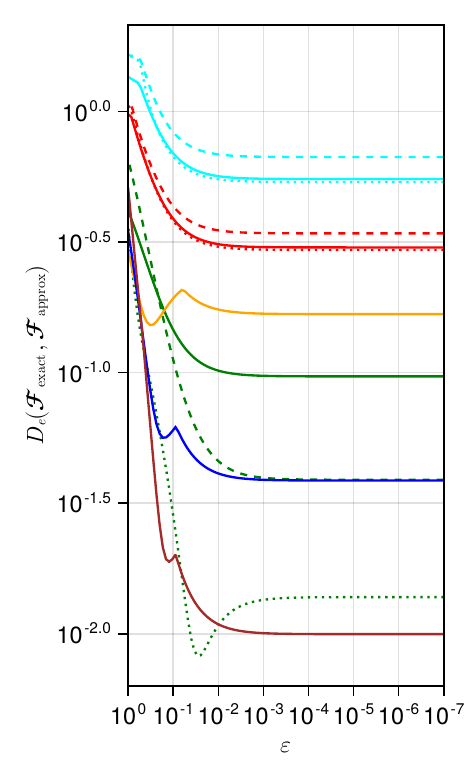} % Trim : gauche, bas, droite, haut
\includegraphics[width=5cm,trim={0cm 0cm 0cm 0cm},clip]{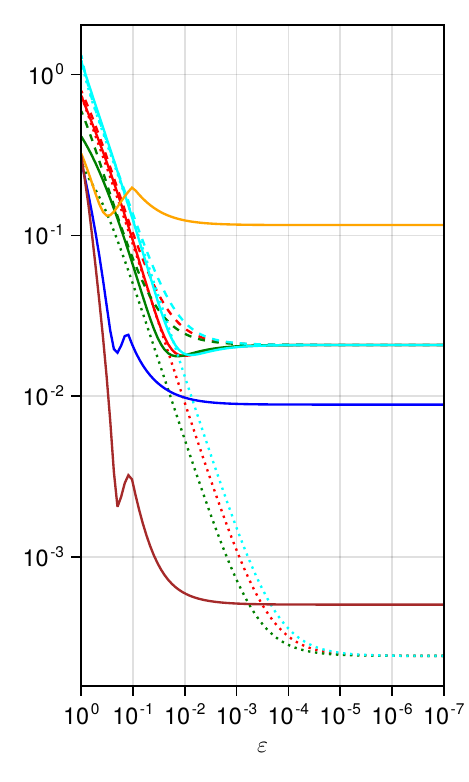} % Trim : gauche, bas, droite, haut
\includegraphics[width=5cm,trim={2cm 0cm 0cm 0cm},clip]{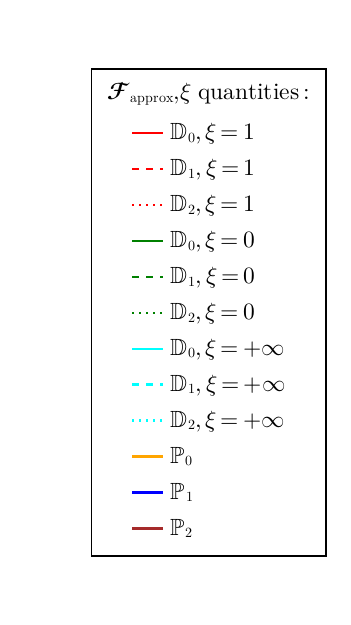} % Trim : gauche, bas, droite, haut
\caption{
Relative errors against the parameter $\ep$. We study here the convergence ${\delta_\gw \rightarrow 0}$, with $\beta$ and $\pa{G_j}_{j=1}^2$ constant, and three ways of finding~$\bal$. (Left) $(\beta_1,\beta_2) = \pa{0.3,1.2}$. (Right) $(\beta_1,\beta_2) = \pa{0.3, 0.7}$.
}\label{fig:min_methods}
\end{center}
\end{figure}

% \subsection{Increase of $n$}% Does not show that multipoint perturbation is good
% \label{sub:Increase of $n$}
% We study how the multipoint perturbation behaves as $n$ inscreases

% \begin{figure}[h]
% \includegraphics[width=10cm,trim={0cm 0cm 0cm 0cm},clip]{figs/n_increase.pdf}
% \caption{As $n$ inscreases.}\label{fig:convergence_quantities}
% \end{figure}

\subsection{First multipoint approximation, and then standard perturbation theory}%
\label{sub:Multipoint approximation and perturbation theory}

Since multipoint perturbation is particularly efficient on $\Aff \pa{G_j}_{j=1}^n$, it is natural in the case when $G \notin \Aff \pa{G_j}_{j=1}^n$ to try to first solve~\eqref{eq:opt_alpha}, then compute $\D_2$, and finally compute the second order approximation using standard perturbation theory between $\sum_{j=1}^{n} \alpha_j G_j$ and $G$. We observe that the resulting total error is very similar the last column in Figure~\ref{fig:delta_alpha}, and thus omit to report the results here.

\section{Proofs}%
\label{sec:Proofs}
In this section we provide the proofs of the propositions.

\subsection{Bounds}
\label{subsec:Preliminary bound}

Let us first start by showing a standard result, enabling to have a bound on $\pa{z - H(G)}^{-1}$ explicit in $z$, and relating the bound to the definition~\eqref{eq:set_operators} of $\cG$.

\begin{lemma}[Explicit bound on the resolvent]\label{lem:bound_res}
	Take an essentially self-adjoint and bounded from below operator $H^0$ on the Hilbert space $\cH$. Take $G \in \cG$, $\mu,\eta > 1$ with $\eta$  large enough so that 
\begin{align}\label{eq:assum_G}
\norm{ \pa{H^0 + \eta}^{-\f 12} G \pa{H^0 + \eta}^{-\f 12}} \le \f 12.
\end{align}
The last condition is satisfied for instance for $\eta = \max(4 c^G_{1/8},1)$, where $c^G_{\ep}$ is defined in~\eqref{eq:set_operators}. Take a contour $\cC \subset \C$ such that $\d \pa{\cC, \sigma(H(G))} =: \xi > 0$. Then there holds
\begin{multline}\label{eq:ineq_f}
	\mymax{z \in \cC} \nor{\pa{z - H(G)}^{-1} }{e} \\
	\le 2\pa{1 + \f{\ab{\mu - \eta} }{\eta + \min \sigma(H(G))}}\pa{1+ 2\ab{\eta + \mymax{z \in \cC} \ab{z}} \pa{1+ \f{\ab{\eta + \mymax{z \in \cC} \ab{z}}}{\xi}}}.
\end{multline}
\end{lemma}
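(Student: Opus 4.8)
The strategy is to bound $\nor{(z-H(G))^{-1}}{e}$ by first replacing the energy weight $(H^0+\mu)^{\kappa/2}$ by the weight $(H^0+\eta)^{\kappa/2}$ attached to the same $\eta$ that makes~\eqref{eq:assum_G} hold, and then to estimate the resolvent of $H(G)$ through the resolvent of $H^0+\eta$. Concretely, I would write
\[
(H^0+\mu)^{\frac\kappa2}(z-H(G))^{-1}(H^0+\mu)^{\frac\kappa2}
= \Big[(H^0+\mu)^{\frac\kappa2}(H^0+\eta)^{-\frac\kappa2}\Big]\;
(H^0+\eta)^{\frac\kappa2}(z-H(G))^{-1}(H^0+\eta)^{\frac\kappa2}\;
\Big[(H^0+\eta)^{-\frac\kappa2}(H^0+\mu)^{\frac\kappa2}\Big],
\]
and bound the two outer factors. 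Since $\kappa\in[0,1]$ and $t\mapsto \big(\frac{t+\mu}{t+\eta}\big)^{\kappa/2}$ is monotone on $\sigma(H^0)\subset[\alpha,\infty)$, the spectral theorem gives $\norm{(H^0+\mu)^{\kappa/2}(H^0+\eta)^{-\kappa/2}}\le \max\big(1,(\mu/\eta)^{1/2}\big)\le 1+|\mu-\eta|/(\eta+\min\sigma(H^0))$, and similarly for the reciprocal factor; this is where the prefactor $1+\frac{|\mu-\eta|}{\eta+\min\sigma(H(G))}$ in~\eqref{eq:ineq_f} comes from (using $\min\sigma(H^0)\ge \min\sigma(H(G))$ up to the sign conventions, or simply absorbing constants). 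The factor $2$ out front collects these two outer contributions.

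The core is then to bound $M(z):=\norm{(H^0+\eta)^{\kappa/2}(z-H(G))^{-1}(H^0+\eta)^{\kappa/2}}$. I would factor
\[
(z-H(G))^{-1} = (H^0+\eta)^{-\frac12}\Big(1 - (z+\eta)(H^0+\eta)^{-1} - (H^0+\eta)^{-\frac12}G(H^0+\eta)^{-\frac12}\Big)^{-1}(H^0+\eta)^{-\frac12},
\]
which is the identity $z-H(G) = (H^0+\eta)^{1/2}\big(-1 + (z+\eta)(H^0+\eta)^{-1} + (H^0+\eta)^{-1/2}G(H^0+\eta)^{-1/2}\big)(H^0+\eta)^{1/2}$ rearranged. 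Since $\kappa\le 1$, $\norm{(H^0+\eta)^{(\kappa-1)/2}}\le (\alpha+\eta)^{(\kappa-1)/2}\le 1$, so $M(z)$ is controlled by $\norm{\big(1-(z+\eta)(H^0+\eta)^{-1}-(H^0+\eta)^{-1/2}G(H^0+\eta)^{-1/2}\big)^{-1}}$. The $G$-term has norm $\le 1/2$ by~\eqref{eq:assum_G}, so it suffices to invert $1-(z+\eta)(H^0+\eta)^{-1}$ and then perturb. A cleaner route, avoiding a Neumann series in $G$: use the second resolvent identity directly,
\[
(z-H(G))^{-1} = (z-H^0)^{-1} + (z-H^0)^{-1}\,G\,(z-H(G))^{-1},
\]
conjugate by $(H^0+\eta)^{\kappa/2}$, insert $(H^0+\eta)^{\pm 1/2}$ around $G$, and use $\norm{(H^0+\eta)^{\kappa/2}(z-H^0)^{-1}(H^0+\eta)^{1/2}}\le (\eta+\max_{\cC}|z|)/\xi \cdot(\text{bounded})$ together with $\norm{(H^0+\eta)^{\kappa/2}(z-H(G))^{-1}(H^0+\eta)^{1/2}}$, then solve the resulting scalar inequality $M\le A + \tfrac12 M'$ with $M'$ the analogous mixed-weight quantity; tracking the elementary factors $(\eta+\max_\cC|z|)$ and $(1+(\eta+\max_\cC|z|)/\xi)$ arising from $\norm{(z-H^0)^{-1}}\le 1/\xi$ (since $\dist(\cC,\sigma(H^0))\ge \dist(\cC,\sigma(H(G)))$ is not automatic — here one should instead bound $\norm{(z-H^0)^{-1}(H^0+\eta)}\le 1 + (\eta+|z|)/\dist(\cC,\sigma(H^0))$ and, if needed, relate $\dist(\cC,\sigma(H^0))$ to $\xi$, or more robustly carry the whole estimate on $(z-H(G))^{-1}$ from the start) produces exactly the right-hand side of~\eqref{eq:ineq_f}.

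The main obstacle is \emph{bookkeeping rather than conceptual}: one must be careful that $H^0$ is only essentially self-adjoint, that $G$ is form-bounded (not operator-bounded) with relative bound $<1$ only after shifting by $\eta$, and that all manipulations take place on the form domain, so that the factorizations $z-H(G)=(H^0+\eta)^{1/2}(\cdots)(H^0+\eta)^{1/2}$ are justified (this is standard KLMN bookkeeping, cf.\ \cite[p.~323]{ReeSim2}). The only genuinely delicate point is handling the non-commuting weights: $(H^0+\eta)^{\kappa/2}$ does not commute with $(z-H(G))^{-1}$, so one cannot simply "move the weights through", and the clean way is to always keep one weight of exponent $\kappa/2\le 1/2$ absorbed into a bounded factor $(H^0+\eta)^{-1/2}$ coming from the $(H^0+\eta)^{1/2}$-factorization, which is why the hypothesis $\kappa\in[0,1]$ is used. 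Once the scalar recursion $M\le A+\tfrac12 M$ is set up with $A$ equal to the product of the elementary factors listed above, solving it gives $M\le 2A$, and combining with the two outer weight-exchange factors yields~\eqref{eq:ineq_f}.
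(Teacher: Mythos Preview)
Your outer weight--exchange step (passing from $(H^0+\mu)^{\kappa/2}$ to $(H^0+\eta)^{\kappa/2}$, producing the prefactor $1+|\mu-\eta|/(\eta+\lambda)$) matches the paper exactly. The difficulty is in the core estimate of $(H^0+\eta)^{1/2}(z-H(G))^{-1}(H^0+\eta)^{1/2}$, and here your two proposed routes both have a genuine gap.

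Both your factorization $z-H(G)=(H^0+\eta)^{1/2}\big(-1+(z+\eta)(H^0+\eta)^{-1}-J\big)(H^0+\eta)^{1/2}$ and your second--resolvent identity $(z-H(G))^{-1}=(z-H^0)^{-1}+(z-H^0)^{-1}G(z-H(G))^{-1}$ require, at some stage, control of $(z-H^0)^{-1}$ for $z\in\cC$. But the hypothesis only gives $\xi=\dist(\cC,\sigma(H(G)))>0$; nothing prevents $\cC$ from intersecting $\sigma(H^0)$, in which case $(z-H^0)^{-1}$ does not even exist. You flag this yourself (``$\dist(\cC,\sigma(H^0))\ge\dist(\cC,\sigma(H(G)))$ is not automatic''), but the suggested fixes are too vague. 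Moreover, even when $(z-H^0)^{-1}$ exists, your recursion is not $M\le A+\tfrac12 M$: conjugating the second resolvent identity gives $M\le N+\tfrac12 N\, M$ with $N=\norm{(H^0+\eta)(z-H^0)^{-1}}$, and this is only solvable for $M$ if $N<2$, which again depends on $\dist(z,\sigma(H^0))$.

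The paper sidesteps this entirely by pivoting through $(H(G)+\eta)^{-1}$ rather than $(z-H^0)^{-1}$. Setting $D:=(H^0+\eta)^{1/2}(H(G)+\eta)^{-1}(H^0+\eta)^{1/2}$, one has $D^{-1}=1+J$ with $J=(H^0+\eta)^{-1/2}G(H^0+\eta)^{-1/2}$, so $\norm{D}\le 2$ directly from~\eqref{eq:assum_G}. Then writing $(H(G)+\eta)(H(G)-z)^{-1}=1+(\eta+z)(H(G)-z)^{-1}$ and iterating once produces
\[
(H^0+\eta)^{1/2}(H(G)-z)^{-1}(H^0+\eta)^{1/2}
= D\Big(1+(\eta+z)(H^0+\eta)^{-1/2}\big(1+(\eta+z)(H(G)-z)^{-1}\big)(H^0+\eta)^{-1/2}D\Big),
\]
where now only $\norm{(H(G)-z)^{-1}}\le 1/\xi$ is needed, and $\sigma(H^0)$ never appears. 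This is the missing idea: route the $z$--dependence through the resolvent of $H(G)$ itself, and use the \emph{real} shift $-\eta$ (for which $(H(G)+\eta)^{-1}$ is controlled via $D$) to absorb the energy weights.
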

We used the notation $\d \pa{S_1,S_2} := \min_{s_1 \in S_1, s_2 \in S_2} \ab{s_1-s_2}$, and the energy norm which depends on $\mu$ is defined in~\eqref{eq:norm_dmatrices}.
% \cgd{Here we should get a righthandside that does not depend on $z$ if we want to use it afterwards. Possible to get something uniform in $G$?}\louis{it's not possible to get smthg independent on $G$}
\begin{proof}[Proof of Lemma~\ref{lem:bound_res}]
	We denote by $\lambda := \min \sigma(H^0) \ge 0$. Also define
\begin{align*}
	D &:= \pa{H^0 + \eta}^{\f 12} \pa{H(G) + \eta}^{-1} \pa{H^0 + \eta}^{\f 12}, \\
	J &:= \pa{H^0 + \eta}^{-\f 12} G \pa{H^0 + \eta}^{-\f 12},
\end{align*}
and there holds $\norm{J} \le \f 12$ by assumption \eqref{eq:assum_G}. In the sense of forms, $1 \le \eta \le H^0 + \eta$, so $\norm{\pa{H^0 + \eta}^{-\f 12}} \le 1$. Moreover
\begin{align*}
D^{-1} = \pa{H^0 + \eta}^{-\f 12} \pa{H(G) + \eta} \pa{H^0 + \eta}^{-\f 12} = 1 + J
\end{align*}
and hence $\norm{D} \le \pa{1 - \norm{J}  }^{-1} \le 2$. Now 
\begin{align*}
	& \pa{H^0 + \eta}^{\f 12} \pa{H(G) - z}^{-1} \pa{H^0 + \eta}^{\f 12} \\
	& \qquad  = \pa{H^0 + \eta}^{\f 12} \pa{H(G) + \eta}^{-1} \pa{H(G) + \eta} \pa{H(G) -z}^{-1} \pa{H^0 + \eta}^{\f 12} \\
	& \qquad = D \pa{H^0 + \eta}^{-\f 12}  \pa{H(G) + \eta} \pa{H(G) -z}^{-1} \pa{H^0 + \eta}^{\f 12} \\
	& \qquad = D \pa{1 + \pa{\eta + z} \pa{H^0 + \eta}^{-\f 12} \pa{H(G) -z}^{-1} \pa{H^0 + \eta}^{\f 12}} \\
	& \qquad = D \pa{1 + \pa{\eta + z}\pa{H^0 + \eta}^{-\f 12} \pa{1+ \pa{\eta + z} \pa{H(G) -z}^{-1}} \pa{H^0 + \eta}^{-\f 12} D},
\end{align*}
and we deduce that
\begin{align*}
\norm{\pa{H^0 + \eta}^{\f 12} \pa{H(G) - z}^{-1} \pa{H^0 + \eta}^{\f 12}} \le 2 \pa{1 + 2 \ab{\eta + z} \pa{1+\f{\ab{\eta + z}}{\d \pa{z,\sigma\pa{H(G)}}} } }.
\end{align*}
% \cgd{should this be changed using (23)?}
Then
\begin{align*}
	\pa{H^0 + \mu} \pa{H^0 + \eta}^{-1} = 1 + \pa{\mu - \eta} \pa{H^0 + \eta}^{-1}
\end{align*}
so 
\begin{align*}
	\norm{\pa{H^0 + \mu} \pa{H^0 + \eta}^{-1}} \le 1 + \f{\ab{\mu - \eta} }{\lambda + \eta}.
\end{align*}
We have $H^0 + \mu \ge 1$ and $\pa{H^0 + \eta}^{-1} \ge 0$ in the sense of forms, so $\pa{H^0 + \mu} \pa{H^0 + \eta}^{-1} \ge 0$. By monotonicity of the square root of operators, we deduce that
\begin{align*}
	\norm{\pa{\pa{H^0 + \mu} \pa{H^0 + \eta}^{-1}}^{\f 12}} \le \pa{1 + \f{\ab{\mu - \eta} }{\lambda + \eta}}^{\f 12}.
\end{align*}
We deduce~\eqref{eq:ineq_f} by writting
\begin{multline*}
\pa{H^0 + \mu}^{\f 12} \pa{H(G) - z}^{-1} \pa{H^0 + \mu}^{\f 12}\\
= \sqrt{\pa{H^0 + \mu} \pa{H^0 + \eta}^{-1}} \pa{H^0 + \eta}^{\f 12} \pa{H(G) - z}^{-1} \pa{H^0 + \eta}^{\f 12} \sqrt{\pa{H^0 + \mu} \pa{H^0 + \eta}^{-1}}.
\end{multline*}
We deduce that
\begin{align*}
	\nor{\pa{z - H(G)}^{-1} }{e} \le 2\pa{1 + \f{\ab{\mu - \eta} }{\eta + \lambda}}\pa{1+ 2\ab{\eta + z} \pa{1+ \f{\ab{\eta + z}}{\d \pa{z,\sigma\pa{H(G)}}}}},
\end{align*}
from which we easily get~\eqref{eq:ineq_f}.

We then search for the more explicit condition $\eta =\max( 4 c^G_{1/8},1)$ that fulfills~\eqref{eq:assum_G}.
Take $G \in \cG$ and $\vp \in \cH$ such that the following expressions are finite, we have 
% \cgd{I would start the proof of the Lemma with this.}
\begin{align*}
	&\norm{\ab{G}^{\f 12} \pa{H^0 + \eta}^{-\f 12} \vp}^2 = \ps{\pa{H^0 + \eta}^{-\f 12}\vp, \ab{G} \pa{H^0 + \eta}^{-\f 12} \vp} \\
							     &\qquad  \underset{\substack{~\eqref{eq:set_operators}}}{\le} \; \ps{\pa{H^0 + \eta}^{-\f 12}\vp, \pa{\ep H^0 + c^G_\ep} \pa{H^0 + \eta}^{-\f 12} \vp} \\
& \qquad = \ps{\vp, \pa{\ep + \f{c^G_\ep - \ep \eta}{H^0 + \eta}} \vp} \le \pa{\ep + \f{\ab{c^G_\ep - \ep \eta }}{\lambda + \eta}} \norm{\vp}^2  \\
& \qquad \le \pa{2\ep + \f{c^G_\ep}{\lambda + \eta}} \norm{\vp}^2\le \pa{2\ep + \f{c^G_\ep}{\eta}} \norm{\vp}^2,
\end{align*}
so by the polar decomposition $G = \ab{G} U$,
\begin{align*}
	& \norm{\pa{H^0 + \eta}^{-\f 12} G \pa{H^0 + \eta}^{-\f 12}} = \norm{\pa{H^0 + \eta}^{-\f 12} \ab{G}^{\f 12} U \ab{G}^{\f 12}\pa{H^0 + \eta}^{-\f 12}} \\
								   & \qquad \le \norm{\ab{G}^{\f 12}\pa{H^0 + \eta}^{-\f 12}}^2 \norm{U} \le 2\ep + \f{c^G_\ep}{\eta}.
\end{align*}
Choosing $\ep = 1/8$ and $\eta$ such that $\f{c^G_{1/8}}{\eta} \le 1/4$ guaranties that~\eqref{eq:assum_G} is satisfied.
\end{proof}

\subsection{Standard perturbation theory}%
\label{sub:Standard perturbation theory}

We then provide a proof of the standard perturbation result.

\begin{proof}[Proof of Proposition~\ref{prop:standard_pt}]
% By the resolvent formula,
% \begin{align*}
% \pa{z-H(G)}^{-1} - \pa{z-H(G_1)}^{-1} =  \pa{z-H(G)}^{-1} \gw  \pa{z-H(G_1)}^{-1},
% \end{align*}
We remark that
\begin{align}
\nonumber
\lefteqn{\hzmp \pa{z-H(G)}^{-1}\hzmp}
\\ \nonumber
&\qquad = \hzmp \pa{z-H(G_1)}^{-1} \hzmp \\
&\qquad \qquad \times\hzmm\pa{1 - \gw \pa{z-H(G_1)}^{-1}}^{-1} \hzmp 
\nonumber \\ \nonumber
&\qquad = \hzmp \pa{z-H(G_1)}^{-1} \hzmp 
\\
& \qquad \qquad \times \pa{1 - \hzmm \gw \hzmm \hzmp\pa{z-H(G_1)}^{-1}\hzmp}^{-1}.
\label{eq:trick}
\end{align}
Provided that the condition $\nor{\gw}{a} \le 1/(2C)$ is satisfied,~\eqref{eq:pert_thy} is well-defined in the corresponding norms and we have
\begin{align*}
	\nor{\pa{z-H(G)}^{-1}}{e} \le \f{\nor{ \pa{z-H(G_1)}^{-1}}{e}}{1 - \nor{\gw}{a} \nor{ \pa{z-H(G_1)}^{-1}}{e}} \le 2 C.
\end{align*}
% resulting in the condition $\nor{\gw}{a} < C_1 := \pa{\min_{z \in \cC} \nor{\pa{z-H(G_1)}^{-1}}{e}}^{-1}$ to have~\eqref{eq:pert_thy} well-defined. 
%We relate $\nor{\pa{z-H(G_1)}^{-1}}{e}$ to $\cG$ in Lemma~\ref{lem:bound_res}. 
We can then make a Neumann expansion and have
\begin{align*}
	&\pa{z-H(G)}^{-1} - \pa{z-H(G_1)}^{-1}\sum_{p=0}^{\ell} \pa{\gw\pa{z-H(G_1)}^{-1}}^p \\
	&\qquad \qquad =  \pa{z-H(G_1)}^{-1}\sum_{p=\ell+1}^{+\infty} \pa{\gw\pa{z-H(G_1)}^{-1}}^p\\
	&\qquad\qquad  =  \pa{z-H(G_1)}^{-1} \pa{\gw\pa{z-H(G_1)}^{-1}}^{\ell+1} \pa{1 - \gw \pa{z-H(G_1)}^{-1}}^{-1}
\end{align*}
and we obtain
\begin{multline*}
	\nor{\pa{z-H(G)}^{-1} - \pa{z-H(G_1)}^{-1}\sum_{p=0}^{\ell} \pa{\gw\pa{z-H(G_1)}^{-1}}^p}{e}  \\
	 \le \nor{\gw}{a}^{\ell+1}  \f{\nor{ \pa{z-H(G_1)}^{-1}}{e}^{\ell +2}}{1 - \nor{\gw}{a} \nor{ \pa{z-H(G_1)}^{-1}}{e}} \le 2 C^{\ell +2} \nor{\gw}{a}^{\ell+1}.
\end{multline*}
Still in series of $\gw$, by integration over $\cC$ and using Cauchy's formula~\eqref{eq:cauchy_formula}, we can deduce the expansion 
\begin{align*}
\sol(G) = \sum_{m=0}^{+\infty} \pt_m.
\end{align*}
We integrate over $z \in \cC$ and obtain the result~\eqref{eq:bound_standard_pert}.
\end{proof}

\subsection{Main formula}%
\label{sub:Main formula}

We will use the notation $R_j := \pa{z-H(G_j)}^{-1}$, $G_{ij} := G_i - G_j$ so we have, by the resolvent formula,
\begin{align}\label{eq:diff_res}
\pa{z-H(G)}^{-1} - R_j = \pa{z-H(G)}^{-1} (G-G_j) R_j, \qquad\qquad R_j - R_i = R_j G_{ji} R_i.
\end{align}

% \begin{align*}
% \bbH &:= s_\bal^{-1} \sum_{\substack{1 \le i < j \le n}} \alpha_i \alpha_j \pa{G_i - G_j} \pa{z - H(G_j)}^{-1} \pa{G_i - G_j} \pa{z - H(G_i)}^{-1}, \\
% \bbA &:=s_\bal^{-1} \sum_{j=1}^{n} \alpha_j G_{j} \pa{z - H(G_j)}^{-1}, \qquad\qquad  \bbL := s_\bal^{-1}\sum_{j=1}^{n}  \alpha_j \pa{z - H(G_j)}^{-1}.
% \end{align*}

\begin{proof}[Proof of Theorem \ref{thm:res_form}]
For any ${\bm \alpha} = (\alpha_i)_{i=1}^n\in\R^n,$ let $s_\bal := \sum_{j=1}^{n} \alpha_j$. There holds
\begin{align*}
    s_\bal\left(1 - \pa{z - H(G)} \bbL \right)
&= \pa{z-H(G)}\pa{ \pa{\sum_{j=1}^{n} \alpha_j} \pa{z-H(G)}^{-1} - \sum_{j=1}^{n} \alpha_j R_j} \\
& = \pa{z-H(G)}\sum_{j=1}^{n} \alpha_j \pa{\pa{z-H(G)}^{-1} - R_j}.
\end{align*}
Using~\eqref{eq:diff_res}, and recalling that $\gw := G - \sum_{j=1}^{n} \alpha_j G_j$, we obtain
\begin{align*}
     s_\bal\left(1 - \pa{z - H(G)} \bbL \right)
     &= \sum_{j=1}^{n}  \alpha_j \pa{G - G_j} R_j \\
&  =   \gw\sum_{j=1}^{n}  \alpha_j R_j + \sum_{j=1}^{n}  \alpha_j \pa{\pa{\sum_{i =1}^n  \alpha_i G_i} - G_j} R_j \\
&  = \gw\sum_{j=1}^{n}  \alpha_j R_j + \sum_{i,j=1}^{n} \alpha_j \alpha_i \pa{G_i - G_j} R_j + \pa{s_\bal - 1} \sum_{j=1}^{n}  \alpha_j  G_j R_j \\
& = s_\bal \gw \bbL + \sum_{i,j=1}^{n} \alpha_j \alpha_i \pa{G_i - G_j} R_j 
+ \pa{s_\bal - 1} s_\bal \bbA.
\end{align*}
Noting that $G_{ji} = - G_{ij}$, we rearrange the second term on the right-hand side into
\begin{align*}
\sum_{\substack{1 \le i,j \le n}}  \alpha_i\alpha_j  G_{ij} R_j  
&=  \f 12 \sum_{\substack{1 \le i,j \le n}}  \alpha_i\alpha_j  G_{ij} R_j + \f 12 \sum_{\substack{1 \le i,j \le n}}  \alpha_j \alpha_i  G_{ji} R_i \\
& =  \f 12 \sum_{\substack{1 \le i , j \le n}} \alpha_i\alpha_j  G_{ij} \pa{ R_j - R_i}. 
\end{align*}
Using again~\eqref{eq:diff_res}, we get
\begin{align*}
\sum_{\substack{1 \le i,j \le n}}  \alpha_i\alpha_j  G_{ij} R_j 
&=  -\f 12 \sum_{\substack{1 \le i , j \le n}} \alpha_i\alpha_j  G_{ij}  R_j G_{ij}R_i \\
&=  - \sum_{\substack{1 \le i < j \le n}} \alpha_i\alpha_j  G_{ij}  R_j G_{ij}R_i \\
&= - s_\bal \bbH.
\end{align*}
Therefore,
\[
\left(1 - \pa{z - H(G)} \bbL \right) = \gw\bbL - \bbH + (s_\bal-1) \bbA,
\]
% \begin{align*}
% 	s_\bal \pa{z-H(G)}^{-1} - s_\bal \bbL = s_\bal \pa{z-H(G)}^{-1} \pa{\gw \bbL -\bbH - \Del \bbA},
% \end{align*}
which is equivalent to 
\begin{align*}
 \pa{z-H(G)}^{-1} (1 + \bbH +  \Del \bbA - \gw \bbL) = \bbL.
\end{align*}
\end{proof}

\subsection{Multipoint perturbation theory bound}

% \begin{align*}
% 	\delta_{\bal,\bG} := \mymax{1 \le i,j \le n} \sqrt{|\alpha_i\alpha_j|} \nor{G_i - G_j}{a},\qquad \delta_\bal := \ab{1 - \sum_{j=1}^{n} \alpha_j}, \qquad \delta_\gw := \nor{\gw}{a},
% \end{align*}

% \begin{align*}
% \bbH &:= s_\bal^{-1} \sum_{\substack{1 \le i < j \le n}} \alpha_i \alpha_j \pa{G_i - G_j} \pa{z - H(G_j)}^{-1} \pa{G_i - G_j} \pa{z - H(G_i)}^{-1}, \\
% \bbA &:=s_\bal^{-1} \sum_{j=1}^{n} \alpha_j G_{j} \pa{z - H(G_j)}^{-1}, \qquad\qquad  \bbL := s_\bal^{-1}\sum_{j=1}^{n}  \alpha_j \pa{z - H(G_j)}^{-1}.
% \end{align*}

\begin{proof}[Proof of Corollary~\ref{cor:multipoint_bound}]
We have
% \[
% s_\bal^{-1} \sum_{\substack{1 \le i < j \le n}} \alpha_i \alpha_j G_{ij} R_j G_{ij} R_i
% \]
\begin{align*}
    \hzmm \bbH \hzmp &= s_\bal^{-1} \sum_{\substack{1 \le i < j \le n}} \alpha_i \alpha_j \hzmm G_{ij} R_j G_{ij} R_i \hzmp
    \\
    &= s_\bal^{-1} \sum_{\substack{1 \le i < j \le n}} \alpha_i \alpha_j \hzmm G_{ij}\hzmm \\
    &\quad \times \hzmp R_j \hzmp \hzmm G_{ij} \hzmm \\
    & \quad \times \hzmp R_i \hzmp.
\end{align*}
Recalling that $\delta_{\bal,\bG} := \mymax{1 \le i,j \le n} \sqrt{|\alpha_i\alpha_j|} \nor{G_{ij}}{a}$, we obtain
\begin{align*}
\norm{\hzmm \bbH \hzmp} \le \delta_{\bal,\bG}^2 \tfrac {n(n-1)}2 s_\bal^{-1}  \mymax{1 \le j \le n \\ z \in \cC } \nor{R_j}{e}^2.
\end{align*}
Similarly,
\begin{align*}
\norm{\hzmm \bbA \hzmp} &\le n s_\bal^{-1} \mymax{1 \le j \le n \\ z \in \cC } \ab{\alpha_j} \nor{G_j}{a} \nor{R_j}{e} \\
\norm{\hzmp \bbL \hzmp} &\le n s_\bal^{-1} \mymax{1 \le j \le n \\ z \in \cC } \ab{\alpha_j} \nor{R_j}{e}.
\end{align*}
Standard arguments, expressed in form of Lemma~\ref{lem:bound_res}, enable us to bound $\nor{R_j}{e}$ using the definition of $\cG$. Thus we have a more explicit bound on 
\begin{align}\label{eq:beta}
	\gamma &:= n s_\bal^{-1}\mymax{1 \le j \le n \\ z \in \cC }  \pa{\nor{R_j}{e} \seg{ \ab{\alpha_j}\pa{1+\nor{G_j}{a} } + \tfrac 12  (n-1)  \nor{R_j}{e} }} \le \beta c_\bal,
\end{align} 
% \cgd{why not just $ c_\bal$ - which could be redefined.} \louis{I tried to keep the dependence on $\bal$ simple}
where $\beta$ depends neither on $G$ nor on $\bal$, and we obtain
\begin{align*}
\norm{\hzmm \bbH \hzmp} &\le \gamma \delta_{\bal,\bG}^2, \\
\norm{\hzmm \Del \bbA \hzmp} &\le \gamma \delta_\bal, \\
\norm{\hzmm  \gw \bbL \hzmp} &\le \gamma \delta_\gw.
\end{align*}
Hence,
\begin{align*}
\norm{\hzmm  \pa{\bbH - \Delm \bbA - \gw \bbL} \hzmp} \le \gamma \pa{\delta_{\bal,\bG}^2 + \delta_\bal + \delta_\gw},
\end{align*}
and assuming that $\delta_{\bal,\bG}^2 + \delta_\bal + \delta_\gw < (2\gamma)^{-1}$, we can proceed to the Neumann expansion of $\pa{1 + \bbH - \Delm \bbA - \gw \bbL}^{-1}$, and have
\begin{align*}
\pa{1 + \bbH - \Delm \bbA - \gw \bbL}^{-1} = \sum_{p=0}^{+\infty}  \pa{-\bbH + \Delm \bbA + \gw \bbL}^p.
\end{align*}
Now, to obtain the corresponding series for $\sol(G)$, we need to compute
\begin{align*}
\sol(G) &=  \f{1}{2\pi i} \oint_\cC\pa{z - H\pa{G}}^{-1} \d z \\
&= \f{1}{2\pi i} \oint_\cC\bbL \pa{1 + \bbH - \Delm \bbA - \gw \bbL}^{-1} \d z \\
&= \sum_{p=0}^{+\infty} \f{1}{2\pi i} \oint_\cC \bbL \pa{-\bbH + \Delm \bbA + \gw \bbL}^p \d z  \\
&= \sum_{p=0}^{+\infty} \widetilde{\cD}_p,
\end{align*}
where
\begin{align*}
	\widetilde{\cD}_{(a,b,c)} &:= \cD_{(2a,b,c)}, \qquad \qquad \widetilde{\cD}_{p} := \sum_{\substack{a,b,c \in \N \\ a+b+c = p}} \widetilde{\cD}_{(a,b,c)}, 
	\qquad \qquad
		\widetilde{\D}_{\ell} := \sum_{p=0}^\ell \widetilde{\cD}_{p}.
\end{align*}
\begin{remark} We give here a bound which is not used in the following, but that we find interesting to present
\begin{align*}
	\sol(G) - \widetilde{\D}_\ell = \sum_{p=\ell +1}^{+\infty}  \f{1}{2\pi i} \oint_\cC \bbL \pa{-\bbH + \Delm \bbA + \gw \bbL}^p \d z 
\end{align*}
so
\begin{multline*}
\hzmp \pa{\sol(G) - \widetilde{\D}_\ell} \hzmp = \sum_{p=\ell +1}^{+\infty}  \f{1}{2\pi i} \oint_\cC \hzmp \bbL \hzmp \\
\times \pa{\hzmm \pa{-\bbH + \Delm \bbA + \gw \bbL} \hzmp}^p \d z 
\end{multline*}
and we can conclude that
\begin{align*}
\nor{\sol(G) - \widetilde{\D}_\ell}{e} \le c \pa{\gamma \pa{\delta_{\bal,\bG}^2 + \delta_\bal + \delta_\gw}}^{\ell +1} = c \gamma^{\ell +1}\sum_{\substack{a,b,c \in \N \\ a+b+c = \ell +1}} \delta_{\bal,\bG}^{2a} \delta_\bal^b \delta_\gw^c.
\end{align*}
\end{remark}
Now summing differently, and since $\cD_{\pa{2a+1,b,c}} = 0$ for any $a,b,c \in \N$,
\begin{align*}
\sol(G) = \sum_{p=0}^{+\infty} \widetilde{\cD}_p = \sum_{p=0}^{+\infty} \cD_p,
\end{align*}
so
\begin{align*}
	\sol(G) - \D_\ell = \sum_{p=\ell +1}^{+\infty} \cD_p= \sum_{\substack{a,b,c \in \N \\ a + b + c \ge \ell +1}} \cD_{\pa{a,b,c}},
\end{align*}
but
\begin{align*}
\nor{\widetilde{\cD}_{(a,b,c)}}{e} \le c \delta_{\bal,\bG}^{2a} \delta_\bal^b \delta_\gw^c, \qquad \qquad  \nor{\cD_{(a,b,c)}}{e} \le c \delta_{a \in 2\N}\delta_{\bal,\bG}^{a} \delta_\bal^b \delta_\gw^c,
\end{align*}
where $c \le C' c_\bal^m$ for some $m \in \N$ and $C'$ which does not depend on $G$ or $\bal$. We conclude by summation, using
\begin{align*}
	\nor{\sol(G) - \D_\ell}{e} \le C'_\ell c_\bal^{m_\ell} \sum_{\substack{a,b,c \in \N \\ a+b+c \ge \ell +1}} \delta_{a \in 2\N}\delta_{\bal,\bG}^{a} \delta_\bal^b \delta_\gw^c  \le C'_\ell c_\bal^{m_\ell} \sum_{\substack{a \in 2\N   \\ b,c \in \N \\ a+b+c \in \{ \ell +1 , \ell +2\}}} \delta_{\bal,\bG}^{a + 1} \delta_\bal^b \delta_\gw^c
\end{align*}
as an intermediate step.

Let us now treat the case where $1/2 > (2\gamma)^{-1}$ and 
\begin{align*}
 (2\gamma)^{-1} \le \delta_{\bal,\bG}^2 + \delta_\bal + \delta_\gw \le 1/2,
\end{align*}
so we also have that
\begin{align*}
 q_0 \le \delta_{\bal,\bG}^{\ell +1 + \xi_\ell} + \delta_\bal^{\ell +1} + \delta_\gw^{\ell+1},
\end{align*}
for some $q_0>0$ independent of $\bal$ and $\gw$. The bound remains true, but the preconstant $C_\ell$ may need to be increased, we detail here how we can provide a very coarse one. In this case $\delta_\gw \le 1/2$ and $\delta_\bal \le 1/2$, $c_\alpha$ is bounded from below uniformly in $\bal$, i.e. $c_\alpha \ge q_1>0$, and for instance we can take $m_\ell =1$ and
\begin{align*}
     \mymax{G \in \cG \\ \delta_\gw \le 1/2} \f{\nor{\sol(G) - \D_\ell}{e}}{c_\bal \pa{\delta_{\bal,\bG}^{\ell +1 + \xi_\ell} + \delta_\bal^{\ell +1} + \delta_\gw^{\ell+1}}} \le \f{1}{q_0 q_1} \mymax{G \in \cG \\ \delta_\gw \le 1/2 } \nor{\sol(G) - \D_\ell}{e} =: C_\ell, 
\end{align*}
this quantity being finite. This concludes the proof.
\end{proof}

\subsection{Proof of Proposition~\ref{prop:integrals}}
\begin{proof}[Proof of Proposition~\ref{prop:integrals}]
For any $a \in \{1,\dots,n\}$, any $z \in \pa{ \C \backslash \sigma(H(G_a))} \cup \{\lambda^k(G_a)\}$, we define
\begin{align*}
	R^\perp_a(z) := 
\left\{
\begin{array}{ll}
\pa{\pa{z - H(G_a)}_{\mkern 1mu \vrule height 2ex\mkern2mu \pa{\Ker \pa{\lambda^k(G_a) - H(G_a)}}^\perp}}^{-1} & \mbox{on }  \pa{\Ker \pa{\lambda^k(G_a) - H(G_a)}}^\perp, \\
0 & \mbox{on } \Ker \pa{\lambda^k(G_a) - H(G_a)},
\end{array}
\right.
\end{align*}
extended on all of $\cH$ by linearity. As a function of $z$, it is holomorphic in a neighborhood of $\{\lambda^k(G_a)\}$ containing $\lambda^k(G_a)$, and note that $R_a^\perp(\lambda^k(G_j)) = K_{ja}$. Take $a \in \{1,\dots,n\}$, we define $P_a^\perp = 1-P_a$
% have $1 = P_a + P_a^\perp$ 
so for any $z \in \C \backslash \sigma(H(G_a))$,
	\begin{align*}
	\pa{z-H(G_a)}^{-1} = \pa{P_a + P_a^\perp} \pa{z-H(G_a)}^{-1} \pa{P_a + P_a^\perp} = \pa{z - \lambda^k(G_a)}^{-1} P_a + R_a^\perp(z)
	\end{align*}
	so
	\begin{multline*}
		\pa{z-H(G_a)}^{-1} A \pa{z-H(G_b)}^{-1} \\
		= R_a^\perp(z) A R_b^\perp(z) + \pa{z - \lambda^k(G_b)}^{-1} R_a^\perp(z) A P_b + \pa{z - \lambda^k(G_a)}^{-1} P_a A R_b^\perp(z) \\
		+ \pa{z - \lambda^k(G_a)}^{-1}\pa{z - \lambda^k(G_b)}^{-1} P_a A P_b.
	\end{multline*}
We recall Cauchy's residue formula, 
\begin{align*}
f^{(n)}(w) = \f{n!}{2\pi i}\oint_{\cC_0} \f{f(s) \d s}{(s-w)^{n+1}} 
\end{align*}
holding for any holomorphic function $f$, any $n \in \N$, any $w \in \C$ and any contour $\cC_0 \subset \C$ containing $w$. Let us assume that ${\lambda^k(G_a) \neq \lambda^k(G_b)}$. We see that the first term $R_a^\perp(z) A R_b^\perp(z)$ is holomorphic in $z$, hence the integral is zero. Then
\begin{multline*}
\f{1}{2\pi i}\oint_{\cC}\pa{z - \lambda^k(G_a)}^{-1}\pa{z - \lambda^k(G_b)}^{-1} \d z \\
= \pa{\lambda^k(G_b) - \lambda^k(G_a)}^{-1} + \pa{\lambda^k(G_a) - \lambda^k(G_b) }^{-1} = 0,
\end{multline*}
where we decomposed the integral $\oint_{\cC}$ into two integrals, one around the singularity associated to $\lambda^k(G_a)$, and the other one around the singularity associated to $\lambda^k(G_b)$. So the fourth term also gives a vanishing contribution. Finally,
\begin{align*}
\f{1}{2\pi i}\oint_{\cC} \pa{z - \lambda^k(G_b)}^{-1} R_a^\perp(z) A P_b \d z = R_a^\perp(\lambda^k(G_b)) A P_b = K_{ba} A P_b,
\end{align*}
and we use a similar computation for the remaining integral, and we deduce $\cI_{a,b}(A)$ in~\eqref{eq:comp_cI}. To compute $\cI_{a,b}(A)$ when $\lambda^k(G_a) = \lambda^k(G_b)$, we use the previous one and since the result is regular as $\ab{\lambda^k(G_a) - \lambda^k(G_b)}$ is arbitrarily small, we can deduce that the same formula holds in any case. We compute $\cI_{a,b,c}(A,B)$ with a similar procedure. We nevertheless give detail about the computation of terms of the following kind,
\begin{multline*}
\f{1}{2\pi i}\oint_{\cC}\pa{z - \lambda^k(G_a)}^{-1}\pa{z - \lambda^k(G_b)}^{-1} P_a A P_b B R_c^\perp(z) \d z \\
= \pa{\lambda^k(G_a) - \lambda^k(G_b)}^{-1} P_a A P_b B R_c^\perp(\lambda^k(G_a)) + \pa{\lambda^k(G_b) - \lambda^k(G_a) }^{-1} P_a A P_b B R_c^\perp(\lambda^k(G_b)) \\
= \pa{\lambda^k(G_a) - \lambda^k(G_b)}^{-1} P_a A P_b B \pa{K_{ac} - K_{bc}}.
\end{multline*}
Finally, by the resolvent formula, $K_{ac} - K_{bc} = \pa{\lambda^k(G_b) - \lambda^k(G_a)} K_{ac} K_{bc}$.
\end{proof}

\section{Conclusion}
% \subsection{Conclusion}%
\label{sub:Conclusion}
In this article, we introduced a multipoint perturbation formula for eigenvalue computations. 
It allows one to use the density matrix for several Hamiltonians $H(G_j)=H^0+G_j$ simultaneously to obtain the solution for a nearby Hamiltonian $H(G)=H^0+G$.
Our formula is based on the resolvent formalism and incorporates a new resolvent identity~\eqref{eq:main_formula} involving several Hamiltonians. 
Based on this identity, we then derived approximations of different expansion orders with respect to smallness parameters.
We also derived a detailed complexity analysis allowing one to compare the multipoint perturbation method to the standard perturbation method in terms of convergence order and complexity with the purpose to understand in which regimes multipoint perturbation is more efficient than standard perturbation theory.

We verified the asymptotic estimates~\eqref{eq:main_bound} by a series of numerical results for the discretized Schr\"odinger equation.
% It allows to use the solution $\sol(G_j)$ at several points $G_j$ of the parameter space altogether, to obtain approximations of the solution $\sol(G)$ at a new point $G$. We verify the asymptotic estimates~\eqref{eq:main_bound} by a series of numerical results.
% Many questions associated to the use of the multipoint perturbation formula~\eqref{eq:main_formula} are in order, and it is not easy to explore all possibilities at once. We did not make an exhaustive study, but rather a short one.  Nevertheless, from those simulations, we make a few conclusions, from where the situation is clear. 
% In an application to Schrödinger's operators with periodic boundary conditions, 
We observed, as expected by the theory, that multipoint perturbation is more efficient when the new $G$ is close to the affine space
\begin{align*}
    \Aff \pa{G_j}_{j=1}^{n} = \acs{\sum_{j=1}^{n} \alpha_j G_j \;\middle|\; \alpha_j \in \R, \sum_{j=1}^{n} \alpha_j = 1},
\end{align*}
and when the $G_j$'s are sufficiently close to each other. In such a case, the multipoint perturbation method is one order more accurate.
% , and this corresponds to the smallness of the perturbative parameters $\delta_{\bal,\bG}$, $\delta_\bal$ and $\delta_\gw$ defined in the body of the document. More precisely, if used in a specific regime described in this article, multipoint perturbation can gain one order in the perturbative parameter, with respect to standard perturbation.

 %We concluded that when $n$, $\delta_{\bal,\bG}$, $\delta_\bal$ and $\delta_\gw$ are under some threshold, multipoint perturbation is more efficient than standard perturbation. More precisely, we see that the use of our main formula~\eqref{eq:main_formula}, in some precise  enables to get one order of magnitude 

%In particular, the main power of the method lies when $\ell$, the order of perturbation, is even and $\delta_\bal^{\ell+1}$ as well as $\delta_\gw^{\ell+1}$ are small compared to $\delta_{\bal,\bG}^{\ell+2}$ so that the additional power in $\delta_{\bal,\bG}$ can be exploited.

% \cbs{This has to be re-written... a reader that just looks into the conclusion (as is done many times) has no idea what this article is about.} \louis{I did it}

\section*{Research data management}

The code enabling to produce the figures of this document can be found on the Github repository \url{https://github.com/lgarrigue/multipoint_perturbation} 
% ($13^{\text{th}}$ commit) 
commit \linebreak ee63c611d7135865b1743f257db6d2dec0d4f931 and on Zenodo \url{https://doi.org/10.5281/zenodo.7929850}

\section*{Acknowledgements}
The authors L.G. and B.S. acknowledge funding by the Deutsche Forschungsgemeinschaft (DFG, German Research Foundation) - Project number 442047500 through the Collaborative Research Center ``Sparsity and Singular Structures'' (SFB 1481).

\section{Appendix}%
\label{sec:appendix}

\subsection{Convergence of the quantities}%
\label{sub:Convergence of the quantities}

In this section we show that for our numerical example, $M=30$ is enough to discretize the Schrödinger operator. We take $n=2$ and $\bal = [\f 12, \f 12]$, $g=0$. We recall that $M+1$ denotes the number of planewaves that we take to numerically discretize the Hilbert space $\cH$. In Figure~\ref{fig:convergence_quantities} we display
\begin{align*}
E(M) := \f{\ab{\lambda^k(G)(M) - \lambda^k(G)(M=100)}}{\ab{\lambda^k(G)(M=100)}} 
\end{align*}
against $M$ for $k=1$. We see that the error on the eigenvalue for $M = 30$ corresponds to an error of $10^{-6}$ which seems reasonably small. Moreover, increasing $M$ does not significantly change any of the other plots presented in the numerical section.

\begin{figure}[h]
\includegraphics[width=10cm,trim={0cm 0cm 0cm 0cm},clip]{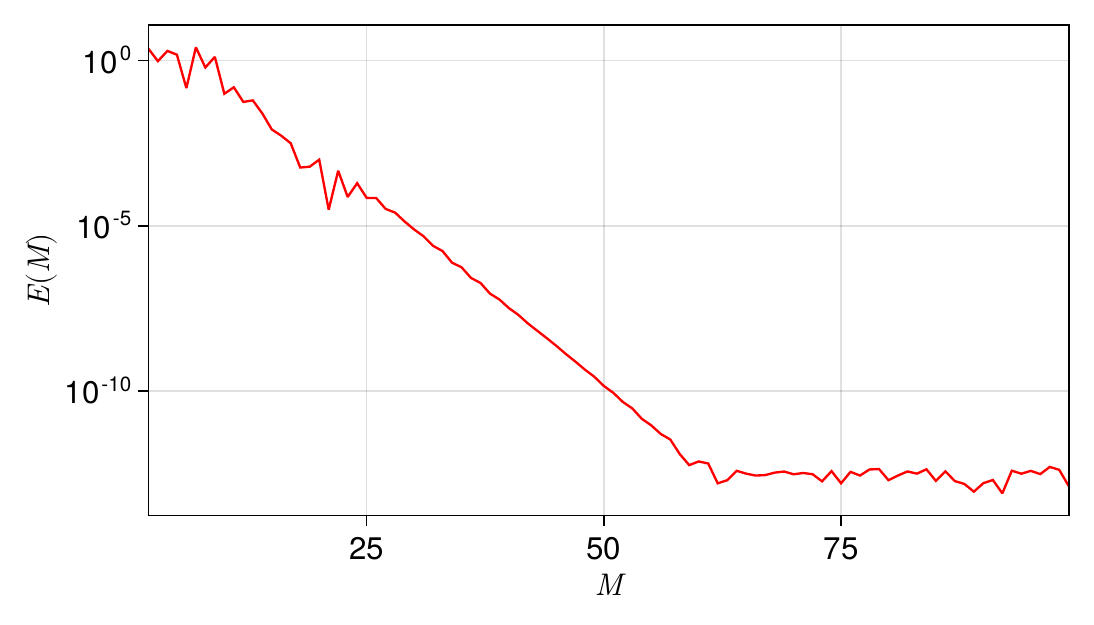}
\caption{Convergence of eigenvalues as $M$ increases.}\label{fig:convergence_quantities}
\end{figure}

\subsection{$K_{ij}$'s from $K_i$'s}%
\label{sub:Kij_from_Ki}

Here we will consider that one does not know $K_{ij}$ but only ${K_j := K(G_j)}$ using definition~\eqref{def:pseudoinv}, i.e. the quantities known in standard perturbation theory. This is not an issue since we can easily obtain approximations of $K_{ij}$ by $K_j$ if $\lambda_i$ and $\lambda_j$ are close, using the following lemma that is based on perturbative arguments. We recall that in numerical practice, the intermediate steps to compute $K_i$'s enable to compute the $K_{ij}$'s. For $j\in \{1,\dots,n\}$, we define $\lambda_j := \lambda^k(G_j)$.
\begin{lemma}[Deducing $K_{ij}$ from $K_j$]\label{lem:Kij_Kj}
If 
\begin{align*}
\ab{\lambda_i - \lambda_j} < \norm{\hzmm K_j\hzmp}^{-1},
\end{align*}
then
\begin{align}\label{eq:Kij_from_Ki}
K_{ij} = K_j \pa{1 + \pa{\lambda_i - \lambda_j}K_j}^{-1} = K_j \sum_{m=0}^{+\infty} \pa{\lambda_j - \lambda_i}^m K_j^m
\end{align}
\end{lemma}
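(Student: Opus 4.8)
The plan is to exploit that $K_j := K(G_j)$ and $K_{ij}$ are defined through the \emph{same} orthogonal splitting $\cH = \cH_0 \oplus \cH_0^\perp$, where $\cH_0 := \Ker\pa{\lambda_j - H(G_j)}$ is the (one-dimensional) $k\thh$ eigenspace of $H(G_j)$; recall that $\lambda_j = \lambda^k(G_j)$ is isolated in $\sigma(H(G_j))$ by Assumption~\ref{as:convgi}, so that $\norm{\hzmm K_j \hzmp} < \infty$ and the hypothesis of the lemma is a genuine smallness condition. Inspecting~\eqref{def:pseudoinv} and the definition of $K_{ij}$ in Section~\ref{sub:prelim}, both $K_j$ and $K_{ij}$ vanish on $\cH_0$ and leave $\cH_0^\perp$ invariant, with $K_j = \pa{\pa{\lambda_j - H(G_j)}\big|_{\cH_0^\perp}}^{-1}$ and $K_{ij} = \pa{\pa{\lambda_i - H(G_j)}\big|_{\cH_0^\perp}}^{-1}$ on $\cH_0^\perp$.

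First I would establish the exact operator identity $K_{ij} = K_j\pa{1 + \pa{\lambda_i - \lambda_j}K_j}^{-1}$. On $\cH_0^\perp$ one writes $\lambda_i - H(G_j) = \pa{\lambda_j - H(G_j)} + \pa{\lambda_i - \lambda_j}$ and multiplies by $K_j$ on the left to get $K_j\pa{\lambda_i - H(G_j)}\big|_{\cH_0^\perp} = 1 + \pa{\lambda_i - \lambda_j}K_j$ on $\cH_0^\perp$; inverting gives $K_{ij} = \pa{1 + \pa{\lambda_i - \lambda_j}K_j}^{-1}K_j = K_j\pa{1 + \pa{\lambda_i - \lambda_j}K_j}^{-1}$ there, the last equality because $K_j$ commutes with $1 + \pa{\lambda_i - \lambda_j}K_j$. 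On $\cH_0$ the operator $1 + \pa{\lambda_i - \lambda_j}K_j$ acts as the identity and $K_j$ vanishes, so both sides reduce to $0 = K_{ij}$; since all the operators involved respect the splitting $\cH_0 \oplus \cH_0^\perp$, the identity holds on all of $\cH$.

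Next I would justify the inversion and the Neumann series. Using that $\hzmp$ and $\hzmm$ are mutual inverses, hence $\hzmm K_j^m \hzmp = \pa{\hzmm K_j \hzmp}^m$ for every $m \in \N$, one has $\hzmm\pa{1 + \pa{\lambda_i - \lambda_j}K_j}\hzmp = 1 + \pa{\lambda_i - \lambda_j}\hzmm K_j \hzmp$, which under the hypothesis $\ab{\lambda_i - \lambda_j} < \norm{\hzmm K_j \hzmp}^{-1}$ is invertible in $\cL(\cH)$ with $\cL(\cH)$-convergent Neumann series $\sum_{m=0}^{+\infty}\pa{\lambda_j - \lambda_i}^m \pa{\hzmm K_j \hzmp}^m$. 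Conjugating back by $\hzmp$ shows that $1 + \pa{\lambda_i - \lambda_j}K_j$ is invertible and that $\pa{1 + \pa{\lambda_i - \lambda_j}K_j}^{-1} = \sum_{m=0}^{+\infty}\pa{\lambda_j - \lambda_i}^m K_j^m$, the series converging in $\nor{\cdot}{e}$. Combined with the previous step this is exactly~\eqref{eq:Kij_from_Ki}.

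The only point requiring care is the bookkeeping around the common invariant splitting $\cH_0 \oplus \cH_0^\perp$: one must check that $1 + \pa{\lambda_i - \lambda_j}K_j$ is invertible on the whole space — it is the identity on $\cH_0$, and the smallness hypothesis handles $\cH_0^\perp$ — and that the zero-extensions of $K_j$ and $K_{ij}$ match up, so that the identity proven on $\cH_0^\perp$ passes verbatim to $\cH$. Beyond that the proof is a one-line resolvent manipulation followed by a standard Neumann expansion.
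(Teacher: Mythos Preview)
Your proof is correct and follows essentially the same line as the paper's: both exploit that $K_j$ and $K_{ij}$ share the invariant splitting $\cH_0\oplus\cH_0^\perp$, derive the algebraic identity $K_{ij}=K_j\pa{1+(\lambda_i-\lambda_j)K_j}^{-1}$ on $\cH_0^\perp$ (trivially $0$ on $\cH_0$), and then invoke the smallness hypothesis for the Neumann expansion via the conjugation trick of~\eqref{eq:trick}. The only cosmetic difference is that the paper first records the resolvent-difference identity $K_j-K_{ij}=(\lambda_i-\lambda_j)K_jK_{ij}$ and rearranges, whereas you factor $\lambda_i-H(G_j)=(\lambda_j-H(G_j))+(\lambda_i-\lambda_j)$ and invert directly; these are two phrasings of the same one-line resolvent computation.
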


\begin{proof}[Proof of Lemma \ref{lem:Kij_Kj}]
	We first show that
	\begin{align}\label{eq:res_diff}
	K_j - K_{ij} = \pa{\lambda_i - \lambda_j} K_j K_{ij}.
	\end{align}
	We define $P^\perp_j := 1 - P_j$, we have 
\begin{align*}
K_j = P_j^\perp \pa{\lambda_j - H(G_j)}^{-1} P_j^\perp, \qquad \qquad K_{ij} = P_j^\perp \pa{\lambda_i - H(G_j)}^{-1} P_j^\perp,
\end{align*}
	where we used an abuse of notation because $\pa{\lambda_j - H(G_j)}^{-1}$ should be defined as a pseudo-inverse, but the pseudo-inverse is equal to this quantity on $P^\perp_j \cH$. We have
\begin{align*}
	K_j - K_{ij} &= P^\perp_j \pa{\pa{\lambda_j - H(G_j)}^{-1} - \pa{\lambda_i - H(G_j)}^{-1}} P^\perp_j \\
		     &= \pa{\lambda_i - \lambda_j} P^\perp_j \pa{\lambda_j - H(G_j)}^{-1} \pa{\lambda_i - H(G_j)}^{-1} P^\perp_j \\
		     &= \pa{\lambda_i - \lambda_j} P^\perp_j \pa{\lambda_j - H(G_j)}^{-1} P^\perp_j P^\perp_j \pa{\lambda_i - H(G_j)}^{-1} P^\perp_j \\
		     &= \pa{\lambda_i - \lambda_j} K_j K_{ij},
\end{align*}
where we used that $P^\perp_j \pa{\lambda_j - H(G_j)}^{-1} P_j = P_j \pa{\lambda_j - H(G_j)}^{-1} P^\perp_j = 0$. From~\eqref{eq:res_diff}, we write $K_{ij} \pa{1 + \pa{\lambda_i - \lambda_j} K_j} = K_j$ so $K_{ij}  = K_j \pa{1 + \pa{\lambda_i - \lambda_j} K_j}^{-1}$. With a manipulation similar as in~\eqref{eq:trick}, we have
\begin{align*}
\nor{K_{ij}}{e} \le \f{\nor{K_j}{e}}{1 - \ab{\lambda_i - \lambda_j} \norm{\hzmm K_j \hzmp}},    
\end{align*}
and we can deduce~\eqref{eq:Kij_from_Ki}.
\end{proof}

\bibliographystyle{siam}
\bibliography{multipoint_perturbation}

\begin{thebibliography}{1}

\bibitem{Gaeta23}
{\sc G.~Gaeta}, {\em Perturbation Theory: Mathematics, Methods and Applications}, Springer Nature, 2023.

\bibitem{Kato}
{\sc T.~Kato}, {\em Perturbation theory for linear operators}, Springer, second~ed., 1995.

\bibitem{LevPar02}
{\sc M.~Levitin and L.~Parnovski}, {\em Commutators, spectral trace identities, and universal estimates for eigenvalues}, J. Func. Anal, 192 (2002), pp.~425--445.

\bibitem{Lewin24}
{\sc M.~Lewin}, {\em {Spectral Theory and Quantum Mechanics}}, {Springer Cham}, 2024.

\bibitem{Murdock99}
{\sc J.~A. Murdock}, {\em Perturbations: theory and methods}, SIAM, 1999.

\bibitem{PolDusSta21}
{\sc E.~Polack, G.~Dusson, B.~Stamm, and F.~Lipparini}, {\em Grassmann extrapolation of density matrices for born--oppenheimer molecular dynamics}, Journal of Chemical Theory and Computation, 17 (2021), pp.~6965--6973.

\bibitem{PolMikDus20}
{\sc {\'E}.~Polack, A.~Mikhalev, G.~Dusson, B.~Stamm, and F.~Lipparini}, {\em An approximation strategy to compute accurate initial density matrices for repeated self-consistent field calculations at different geometries}, Mol. Phys, 118 (2020), p.~e1779834.

\bibitem{ReeSim2}
{\sc M.~Reed and B.~Simon}, {\em Methods of {M}odern {M}athematical {P}hysics. {II}. {F}ourier analysis, self-adjointness}, Academic Press, New York, 1975.

\bibitem{ReeSim4}
\leavevmode\vrule height 2pt depth -1.6pt width 23pt, {\em Methods of {M}odern {M}athematical {P}hysics. {IV}. {A}nalysis of operators}, Academic Press, New York, 1978.

\end{thebibliography}
% \bibliography{/home/louis/Documents/biblio}

\end{document}